\title{Structural Parameterizations with Modulator Oblivion}
\author{Ashwin Jacob, Venkatesh Raman, Vibha Sahlot \\
The Institute of Mathematical Sciences, HBNI, Chennai, India.\\ \\
Fahad Panolan \\
Department of Computer Science and Engineering, IIT Hyderabad, India. \\
}
\newcommand{\defparproblem}[4]{
 \vspace{1mm}
\noindent\fbox{
  \begin{minipage}{.95\textwidth}
  \begin{tabular*}{\textwidth}{@{\extracolsep{\fill}}lr} \textsc{#1}    \\ \end{tabular*}
  {\bf{Input:}} #2  \\
  {\bf{Parameter:}} #3 \\
  {\bf{Question:}} #4
  \end{minipage}
  }
  \vspace{1mm}
}
\newtheorem{lemma}{Lemma}
\newtheorem{corollary}{Corollary}
\newtheorem{reduction rule}{Reduction Rule}
\newtheorem{branching rule}{Branching Rule}
\newtheorem{definition}{Definition}
\newtheorem{theorem}{Theorem}
\newtheorem{claim}{Claim}
\newtheorem{proposition}{Proposition}
\newcommand{\tw}{{\sf tw}}
\newcommand{\AAA}{{\mathcal A}}
\newcommand{\OO}{{\mathcal O}}
\newcommand{\FF}{{\mathcal F}}
\newcommand{\VCCVD}{{\sc Vertex Cover By CVD}}
\newtheoremstyle{TheoremNum}
        {\topsep}{\topsep}              
        {\itshape}                      
        {}                              
        {\bfseries}                     
        {.}                             
        { }                             
        {\thmname{#1}\thmnote{ \bfseries #3}}
    \theoremstyle{TheoremNum}
\newtheorem{thmn}{Theorem}
\begin{document}

\maketitle

\begin{abstract}

It is known that problems like {\sc Vertex Cover}, {\sc Feedback Vertex Set} and {\sc Odd Cycle Transversal} are polynomial time solvable in the class of chordal graphs. We consider these problems in a graph that has at most $k$ vertices whose deletion results in a chordal graph, when parameterized by $k$. While this investigation fits naturally into the recent trend of what are called `structural parameterizations', here we assume that the deletion set is not given.

One method to solve them is to compute a $k$-sized or an approximate ($f(k)$ sized, for a function $f$)
 chordal vertex deletion set and then use the structural properties of the graph to design an algorithm. 
This method leads to at least $k^{\OO(k)}n^{\OO(1)}$ running time when we use the known parameterized or approximation algorithms for finding a $k$-sized chordal deletion set on an $n$ vertex graph.

In this work, we design $2^{\OO(k)}n^{\OO(1)}$ time algorithms for these problems. Our algorithms do not compute a chordal vertex deletion set (or even an approximate solution). 
Instead, we construct a tree decomposition of the given graph in time $2^{\OO(k)}n^{\OO(1)}$ where each bag is a union of four cliques and $\OO(k)$ vertices. We then apply standard dynamic programming algorithms over this special tree decomposition. This special tree decomposition can be of independent interest. 

Our algorithms are adaptive (robust) in the sense that given an integer $k$, they detect whether the graph has a chordal vertex deletion set of size at most $k$ or output the special tree decomposition and solve the problem. This is analogous to the polynomial algorithm of Raghavan and Spinrad [J. of Algorithms, 2003] for finding a maximum clique in a unit disk graph without the unit disk representation. The algorithm either found a maximum clique in the graph or output a certificate that the given graph was not a unit disk graph, though it was known that determining whether a given graph was unit disk was $NP$-hard.
 
We also show lower bounds for the problems we deal with under the Strong Exponential Time Hypothesis (SETH).

 \end{abstract}
 

\section{Introduction and Motivation}

Main motivation for parameterized complexity and algorithms is that hard problems have a number of parameters in their input, and feasible algorithms can be obtained when some of these parameters tend to be small. However, barring width parameters (like treewidth and cliquewidth), early parameterizations of problems were mostly in terms of solution size. However starting from the work of Fellows et al~\cite{fellows2007complexity} and Jansen et al~\cite{jansen2013power,fellows2013towards}, the focus shifted to parameterizations by some structure of the input. The motivations for these parameterizations are that 
many problems are computationally easy on special classes of graphs like edge-less graphs, forests and interval graphs. 
Thus parameterizing by the size of a modulator (set of vertices in the graph whose removal results in the easy graph class) became a natural choice of investigation.
Examples  of such parameterizations include {\sc Clique} and {\sc Feedback Vertex Set} parameterized by the size of minimum vertex cover (i.e., modulator to edge-less graphs),  {\sc Vertex Cover} parameterized by the size of minimum feedback vertex set (i.e., modulator to forests)~\cite{jansen2013power, jansen2013vertex}. See also \cite{DBLP:journals/algorithmica/MajumdarR18, DBLP:journals/mst/Majumdar0018} for more such parameterizations. 

We continue this line of work on problems in input graphs that are not far from a chordal graph. By distance to a chordal graph, we mean the number of vertices in the graph whose deletion
results in a chordal graph. We call this set as a chordal vertex deletion set (CVD). Specifically, we look at {\sc Vertex Cover}, {\sc Feedback Vertex Set} and {\sc Odd Cycle Transversal} parameterized by the size of a CVD, as these problems are polynomial time solvable in chordal graphs~\cite{golumbic2004algorithmic,spinrad2003efficient,corneil1988complexity}.

In problems for which the parameter is the size of a modulator, it is also assumed that the modulator is given with the input. This assumption can be removed if finding the modulator is also fixed-parameter tractable (FPT) parameterized by the modulator size. However, there are instances where finding the modulator is more expensive than solving the problem if the modulator is given. For example, 
finding a subset of $k$ vertices whose deletion results in a perfect graph is known to be $W$-hard~\cite{DBLP:journals/tcs/HeggernesHJKV13}, whereas if the deletion set is given, then one can show (as explained a bit later in this section) that {\sc Vertex Cover} (thus {\sc Independent Set}) is FPT when parameterized by the size of the deletion set.

Hence Fellows et al.~\cite{FELLOWS2013541} ask whether the {\sc Independent Set} (or equivalently, {\sc Vertex Cover}) is FPT when parameterized by a (promised) bound on the vertex-deletion distance to a 
perfect graph, without giving the deletion set in the input.
While we don't answer this question, we address a similar question in the context of problems parameterized by deletion distance to chordal graphs, another well-studied class of graphs where { \sc Vertex Cover} is polynomial time solvable whereas the best-known algorithm to find a $k$-sized chordal deletion set takes $O^*(k^{O(k)})$ time. In a similar vein to the question by Fellows et al., we ask whether (minimum) {\sc  Vertex Cover} can be solved in $O^*(2^{O(k)})$ time with only a promise on the size $k$ of the chordal deletion set, and answer the question affirmatively.

\smallskip
\noindent
\textbf{Our Results:}
Specifically we give $2^{O(k)}$ algorithms for the following problems.

\defparproblem{\VCCVD}{A graph $G=(V,E)$ and $k, \ell \in \mathbb{N}$.}{Size $k$ of chordal vertex deletion set in $G$.}{
Is there a vertex cover $C$ of size $\ell$ in $G$?
}

\defparproblem{{\sc Feedback Vertex Set by CVD} ({\sc FVS by CVD})}{A graph $G=(V,E)$ and $k,\ell \in \mathbb{N}$.}{Size $k$ of chordal vertex deletion set in $G$.}{
Is there a subset $X$ of size at most $\ell$ in $G$ such that $G - X$ is a forest?
}

\defparproblem{{\sc Odd Cycle Transversal by CVD} ({\sc OCT by CVD})}{A graph $G=(V,E)$ and $k, \ell \in \mathbb{N}$.}{Size $k$ of chordal vertex deletion set in $G$.}{
Is there a vertex set $X$ of size at most  $\ell$ in $G$ such that $G - X$ is bipartite?
}

We also show that all the problems mentioned above cannot be solved in $\OO^*((2- \epsilon)^{k})$ time  under Strong Exponential Time Hypothesis (SETH) even if a CVD of size $k$ is given as part of the input. 
This matches the upper bound of the known algorithm for \VCCVD\ when the modulator is given.

\smallskip
\noindent
\textbf{Related Work:}
If we are given a CVD $S$ of size $k$ along with an $n$-vertex graph $G$ as the input, then one can easily get a $2^{k}n^{\OO(1)}$ time algorithm (call it $\AAA$) for {\sc Vertex Cover} as follows. First, we guess the subset $X$ of $S$ that is part of our solution. Let $Y$ be the subset of vertices in $V(G)\setminus S$ such that for each $y\in Y$ there is an edge between $y$ and a vertex in $S\setminus X$. Clearly, $X\cup Y$ is part of the {\sc Vertex Cover} solution and it will cover all the edges incident on $S$. Then we are left with finding an optimum vertex cover in $G-(S\cup Y)$ which is  a chordal graph. This can be done in polynomial time. As we have $2^k$ choices for $X$, the total running time  of the algorithm is $2^{k}n^{\OO(1)}$. 
An FPT algorithm for {\sc FVS by CVD} is given by Jansen et al~\cite{jansen2014parameter} where they first find the modulator. This algorithm follows the algorithm to find a minimum FVS in bounded treewidth graphs and a similar trick works for {\sc Odd Cycle Transversal} too, when the modulator is given.  
However, the best known algorithm to find a CVD (modulator $S$) 
of size at most $k$ runs in time $k^{\OO(k)}n^{\OO(1)}$ \cite{cao2016chordal}. 

When the modulator is given, the FPT algorithms discussed above have been generalized for other problems and other classes of graphs (besides those that are $k$ away from the class of chordal graphs).
Let $\Phi$ be a Counting Monadic Second Order Logic (CMSO) formula and $t \geq 0$ be an integer. For
a given graph $G = (V, E)$, the task is to maximize $|X|$ subject to the following constraints: there is a
set $F \subseteq V$ such that $X \subseteq F$, the subgraph $G[F]$ induced by $F$ is of treewidth at most $t$,
and structure $(G[F], X)$ models $\Phi$. Note that the problem corresponds to finding minimum vertex cover and minimum feedback vertex set when $t=0$ and $t=1$ respectively when $\Phi$ is a tautology. For a polynomial $poly$, let $G_{poly}$ be the class of graphs such that, for any $G \in G_{poly}$, graph $G$ has at most $poly(n)$ minimal separators. 
 Fomin et al~\cite{DBLP:journals/siamcomp/FominTV15} gave a polynomial time algorithm for solving this optimization problem on the graph class $G_{poly}$. 
Consider $G_{poly}+kv$ to be the graph class formed from $G_{poly}$ where to each graph we add at most $k$ vertices of arbitrary adjacencies. Liedloff et al.~\cite{DBLP:journals/algorithmica/LiedloffMT19} further proved that, the above 
problem is FPT on $G_{poly}+kv$, with parameter $k$, where the  modulator is also a part of input. As a chordal graph has polynomially many minimal separators~\cite{golumbic2004algorithmic}, we obtain that  \VCCVD\ and {\sc Feedback Vertex Set by CVD} are FPT when the modulator is given.

A possible way to solve these problems when modulator is not given is to obtain an approximation for the modulator (in this case CVD).
This is the approach that works for problems parameterized by treewidth.
For example, consider the {\sc Independent Set} problem parameterized by treewidth of the graph $\tw$. Using standard dynamic programming (DP), we can find a maximum independent set when we are given a tree decomposition of width $k$ as input in $2^{k} \cdot k^{O(1)} \cdot n$ time\cite{cygan2015parameterized}. But
the best known algorithm for outputting a tree-decomposition of minimum width takes time $\tw^{\OO(\tw^3)}n$ where $\tw$ is the treewidth of the given $n$-vertex graph\cite{bodlaender1996linear}. 
Thus, the total running time is $\tw^{\OO(\tw^3)}n$, when a tree decomposition is not given as an input. But we can overcome this issue by obtaining a tree decomposition of width $5\tw$ in time $2^{\OO(\tw)}n$~\cite{bodlaender2016c} and then applying the DP algorithm over the tree decomposition. 

We do not know of a constant factor (FPT) approximation algorithm for CVD even with $2^{\OO(k)}n^{\OO(1)}$ running time like in the case of treewidth. There are many recent results on polynomial time approximation algorithms for {\sc Chordal Vertex Deletion} \cite{jansen2017approximation,DBLP:conf/approx/AgrawalLM0Z18,kim2018erdHos} with the current best algorithm having a $\OO({\sf opt}\log {\sf opt})$ ratio, where ${\sf opt}$ is the size of minimum CVD~\cite{kim2018erdHos}. Thus, if we use this algorithm  along with algorithm $\AAA$, then the running time will be $2^{\OO(k^2\log k)}n^{\OO(1)}$. 


One previous example we know of a parameterized problem where the FPT algorithm solves the problem without the modulator or even the promise, is {\sc Vertex Cover} parameterized by the size of {\sc K\"{o}nig Vertex Deletion} set $k$. A K\"{o}nig vertex deletion set of $G$ is a subset of vertices of $G$ whose removal results in a graph where the size of its minimum vertex cover and maximum matching are the same. In {\sc Vertex Cover by K\"{o}nig Vertex Deletion}, we are given graph $G=(V,E)$, $k, \ell \in \mathbb{N}$ and an assumption that there exists a K\"{o}nig vertex deletion set of size $k$ in $G$, here $k$ is parameter. We want to ask whether there exist a vertex cover of size $\ell$ in $G$? Lokshtanov et al. \cite{lokshtanov2014faster} solve {\sc Vertex Cover by K\"{o}nig Vertex Deletion} in $\OO^*(1.5214^k)$ time \footnote{$\mathcal{O}^{*}$ notation hides polynomial factor in the input length} without the promise. 

Finally we remark that there is an analogous line of work in the classical world of polynomial time algorithms. For example, it is known that finding a maximum clique in a unit disk graph is polynomial time solvable given a unit disk representation of the unit disk graph~\cite{cliqueUnitDisk}, though it is $NP$-hard to recognize whether a given graph is a unit disk graph~\cite{unitdiskhardrecognise}. Raghavan and Spinrad~\cite{robust} give a robust algorithm that given a graph either finds a maximum clique in the graph or outputs a certificate that the given graph is not a unit disk graph. See also~\cite{brandstadt2001robust, habib2005simple, DBLP:journals/siamcomp/FominTV15} for some other examples of robust algorithms. 

\smallskip
\noindent
\textbf{Our Techniques:}
The first step in our algorithms is to obtain, what we call a semi-clique tree decomposition of the given graph if one exists.
It is known~\cite{golumbic2004algorithmic} that every chordal graph has a clique-tree decomposition, i.e., a tree decomposition where every bag is a clique in the graph. If the modulator is given, then we can add it to each bag, and obtain a tree-decomposition where each bag is a clique plus at most $k$ vertices. In our case (where the modulator is not given), we obtain 
a tree decomposition in $2^{\OO(k)}n^{\OO(1)}$ where each bag can be partitioned into $C\uplus N$, where $C$ can be covered by at most $4$ cliques in $G$ and $\vert N\vert \leq 7k+5$. 
Here we also know a partition $C_1\uplus C_2\uplus C_3 \uplus C_4$ of $C$ where each $C_i$ is a clique.  We call this tree decomposition a $(4,7k+5)$-semi clique tree decomposition. 
 Our result in this regard is formalized 
in the following theorem.

\begin{theorem} \label{theorem:treedecomp}
There is an algorithm that given a graph $G$ and an integer $k$ runs in time $\OO(2^{7k} \cdot (kn^{4} +n^{\omega+2}))$ where $\omega$ is the matrix multiplication exponent and either constructs a $(4,7k+5)$-semi clique tree decomposition $\mathcal{T}$ of $G$ or concludes that there is no chordal vertex deletion set of size $k$ in $G$. Moreover, the algorithm also provides a partition $C_1\uplus C_2\uplus C_3\uplus C_4 \uplus N$ of each bag of ${\cal T}$ such that  $\vert N\vert \leq 7k+5$ and $C_i$ is a clique in $G$ for all $i\in \{1,2,3,4\}$. 
\end{theorem}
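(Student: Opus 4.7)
The plan is to emulate the recursive construction of a clique tree for chordal graphs while handling the absence of the modulator $S$ by enumerating $2^{\OO(k)}$ guesses about how $S$ intersects the region currently being processed. The overall algorithm is divide-and-conquer: at each recursive call we work on a subgraph $H$ with a small boundary set $W \subseteq V(H)$ of size $\OO(k)$ inherited from the parent bag, and we must output a tree decomposition of $H$ whose root bag contains $W$.

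For each call I would guess $W \cap S$ (as well as which of a few further candidate vertices lie in $S$), then exploit the fact that $G-S$ is chordal to locate the next bag. In a chordal graph one can always find a ``nice'' separator---the closed neighborhood of a simplicial vertex, or more generally a potential maximal clique---and pulling this back to $G$ yields a clique of $G-S$ together with at most $k$ vertices of $S$. The boundary $W$ together with the guessed $S$-vertices can cut this closed neighborhood into a bounded number of pieces, so the resulting bag splits into at most four cliques $C_1 \uplus C_2 \uplus C_3 \uplus C_4$ of $G$ plus a small set $N$ absorbing the boundary, the guessed $S$-vertices and a constant-sized core. A careful amortized count over one root-to-leaf path must give $2^{7k}$ total guesses and $|N| \le 7k+5$.

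Having built the bag, I would recurse on the connected components of $H \setminus (C_1 \cup C_2 \cup C_3 \cup C_4)$, with the intersection of each component with the current bag serving as the new boundary. The dominant per-step cost is verifying that a candidate set is coverable by four cliques, which can be done via matrix-multiplication-based routines on the complement graph; this accounts for the $n^{\omega+2}$ factor, while the $kn^4$ factor covers neighborhood computations, component splitting, and bookkeeping. If no valid bag can be found on any branch, we correctly conclude that no CVD of size $k$ exists in $G$.

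The main obstacle is the structural lemma underlying the bag construction: proving that four cliques (a constant, not a function of $k$) always suffice to cover the non-$N$ part of a candidate bag, and that $|N|$ can always be held to $7k+5$. This will require a delicate analysis of how modulator vertices can merge pieces of the chordal graph $G-S$ around a simplicial vertex or a potential maximal clique, and of how the accumulated boundary propagates through the recursion. Choosing the right bag template, and proving that it must exist whenever a CVD of size $k$ exists, is what allows the clique count to remain a fixed constant rather than scaling with $k$.
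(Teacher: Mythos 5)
Your high-level plan---recurse on components, carry a boundary into the root bag, and pay $2^{\OO(k)}$ per step to compensate for not knowing the modulator---matches the paper's architecture, but the two things you defer as ``the main obstacle'' are precisely the content of the proof, and your sketch supplies no mechanism for either. First, the bound of four cliques is not obtained by analysing how modulator vertices merge pieces of $G-S$ around a simplicial vertex. The paper maintains an explicit counter $d\in\{0,1,2\}$ for the number of cliques already accumulated in the boundary: each new bag adds exactly one clique (the clique part of a freshly computed separator that induces a clique plus at most $k$ vertices), and once three boundary cliques have accumulated, a dedicated procedure (${\sf SplitCliques}$) separates those three cliques from one another by a node multiway cut of size at most $k$, computed after guessing the clique part of the separating bag among the at most $\OO(2^k n)$ maximal cliques of $G$. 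Each child component is then adjacent to at most one of the three cliques, which resets the counter; without this reset the number of cliques per bag grows with the recursion depth rather than staying constant. Second, $|N|\le 7k+5$ is not an amortized bound along a root-to-leaf path; it is enforced at every step by computing, whenever $|N|$ reaches $5k+3$, a $\tfrac{2}{3}$-balanced separation with respect to a weight function supported on the current $N$, whose separator again induces a clique plus at most $k$ vertices. Its existence follows from the $(1,k)$-semi-clique tree decomposition that $G$ admits, and it is found by enumerating maximal cliques (this is where $n^{\omega}$ enters, via enumeration delay) and running max-flow over the $2^{\OO(k)}$ bipartitions of $N$. Each child then inherits at most $4k+2$ old $N$-vertices plus $\OO(k)$ new ones.

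Two further inaccuracies: the boundary of a recursive call cannot be assumed to have size $\OO(k)$---it is the full neighborhood of the component and contains up to three cliques of unbounded size plus $\OO(k)$ extra vertices, which is exactly why the clique-counting machinery is needed; and the four-clique partition of each bag is produced by the construction itself, never verified a posteriori---deciding whether a vertex set is coverable by four cliques is equivalent to $4$-colouring the complement and is NP-hard, so the matrix-multiplication routine you invoke for that purpose does not exist.
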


After getting a $(4,7k+5)$-semi clique tree decomposition, we then design DP algorithms for {\sc Vertex Cover}, {\sc Feedback Vertex Set} and {\sc Odd Cycle Transversal} on this tree decomposition. Since the vertex cover of a clique has to contain all but one vertex of the clique, the number of ways the solution might intersect a bag of the tree is at most $\OO(2^{7k}n^4)$. Using this fact, one can bound the running time for the DP algorithm to $\OO(2^{7k}n^5)$. The overall running time would be the sum of the time taken to construct a $(4,7k+5)$-semi clique tree decomposition  and the time of the DP algorithm on this tree decomposition which is bounded by $\OO(2^{7k}n^{5})$. In the case of {\sc Feedback Vertex Set} and {\sc Odd Cycle Transversal}, again from each clique all but two vertices will be in the solution. Using this fact one can bound the running time of {\sc FVS By CVD} and {\sc OCT by CVD} to be $\OO^*(2^{O(k)})$. 

We like to add that the algorithms obtained are robust due to Theorem~\ref{theorem:treedecomp}.

\smallskip
\noindent
\textbf{Organization of the paper:} In Section \ref{sec:Prelims}, we state  graph theoretic notations used in this paper and give the necessary preliminaries on tree decomposition and parameterized complexity. In Section \ref{sec:Tree Decom}, 
we prove Theorem~\ref{theorem:treedecomp}. 
In Section \ref{sec:Structural Param}, we give algorithms for problems \VCCVD, {\sc FVS by CVD} and {\sc OCT by CVD} using dynamic programming on semi clique tree decomposition and lower bounds for these problems assuming SETH. 


\section{Preliminaries} \label{sec:Prelims}

For $n\in {\mathbb N}$, $[n]$ denotes the set $\{1,\ldots,n\}$. 
We use $A \uplus B$ to denote the set formed from the union of disjoint sets $A$ and $B$. For a function
$w : X \rightarrow \mathbb{R}$, we use $w(D) = \sum_{x \in D}^{} w(x)$.

We use the term graph for a simple undirected graph without loops and parallel edges. 
For a graph $G$, we use $V(G)$ and $E(G)$ to denote its vertex set and edge set, respectively. 
Let  $G =(V, E)$ be a graph.  
For $V'\subseteq V$,  $G[V']$ and $G-V'$ denote the graph induced on $V'$ and $V\setminus V'$, respectively.  
For a vertex $v \in V$, $G - v$ denotes the graph $G-\{v\}$.
 For a vertex $v \in V$, $N_G(v)$ and $N_G[v]$ denote the open neighborhood and closed neighborhood  of $v$, respectively. That is, $N_G(v)=\{u \colon \{v,u\} \in E\}$ and $N_G[v]=N_G(v) \cup \{v\}$. Also we define for a subset $X \subseteq V(G)$, $N_G(X) = \bigcup_{v \in X} (N_G(v) \setminus X)$ and $N_G[X] = N_G(X) \cup X$. We omit the subscript $G$, when the graph is clear from the context.   
 A graph is \textit{chordal} if it does not contain a cycle of length greater than or equal to $4$ as an induced subgraph. 
 A subset $S \subseteq V(G)$ such that $G - S$ is a chordal graph is called the \textit{chordal vertex deletion set}.
We say that a graph $G$ is a union of $\ell$ cliques if $V(G) = V_1\uplus \ldots \uplus V_{\ell}$ and $V_i$ is a clique in $G$ for all $i\in \{1,\ldots,\ell\}$. 
We use standard notation and terminology from the book~\cite{Diestel} for graph-related terms which are not explicitly defined here.

Next we define separator, separation and tree decomposition in graphs and finally we define our new notion of special tree decomposition  which we call $(c,\ell)$-semi clique tree decomposition where $c,\ell \in {\mathbb N}$.   

\begin{definition}[Separator]
Given a graph $G$ and vertex subsets $A,B \subseteq V(G)$, a subset $C \subseteq V(G)$ is called a separator of $A$ and $B$ if every path from a vertex in $A$ to a vertex in $B$ (we call it $A-B$ path) contains a vertex from $C$.
\end{definition}

\begin{definition}[Separation]
For a graph $G$, a pair of vertex subsets $(A,B)$ is a separation in $G$ if $A \cup B = V(G)$ and $A \cap B$ is a separator of $A \setminus B$ and $B \setminus A$.
\end{definition}

\begin{definition}[Balanced Separator]
For a graph $G$, a weight function $w: V(G) \rightarrow \mathbb{R}_{\geq 0}$ and $0<\alpha<1$, a set $S \subseteq V(G)$ is called an $\alpha$-balanced separator of $G$ with respect to $w$ if for any connected component $C$ of $ G - S$,  $ w(V(C)) \leq \alpha \cdot w(V(G))$. 
\end{definition}

\begin{definition}[Balanced Separation]
Given a graph $G$, a weight function $w: V(G) \rightarrow \mathbb{R}_{\geq 0}$, and $0<\alpha<1$, a pair of vertex subsets $(A,B)$ is an $\alpha$-balanced separation in $G$ with respect to $w$ if  $(A,B)$ is a separation in $G$ and 
 $ w(A \setminus B) \leq \alpha \cdot w(V(G))$ and  $ w(B \setminus A) \leq \alpha \cdot w(V(G))$. 
\end{definition}

%

\begin{definition}[Tree decomposition]
A tree decomposition of a graph $G$ is a pair ${\cal T}=(T,\{X_t\}_{t\in V(T)})$, 
where $T$ is a tree and for any $t\in V(T)$, a vertex subset $X_t\subseteq V(G)$ is associated with it, called a {\em bag}, such that the following conditions holds. 
\begin{itemize}
\item $\bigcup_{t\in V(T)} X_t = V(G)$.
\item For any edge $\{u,v\} \in E(G)$, there is a node $t\in V(T)$ such that $u,v\in X_t$.
\item For any vertex $u\in V(G)$, the set $\{t \in V(T) \colon u\in X_t\}$ of nodes induces a connected subtree of $T$. 
\end{itemize}
The width of the tree decomposition ${\cal T}$ is $\max_{t\in V(T)} \vert X_t\vert -1$ and the treewidth of $G$ is the minimum width over all tree decompositions of $G$. 
\end{definition}

\begin{proposition}[\cite{downey2013fundamentals}]
\label{prop:clique_in_a_bag}
Let $G$ be a graph and $C$ be a clique in $G$. Let ${\cal T}=(T,\{X_t\}_{t\in V(T)})$ be a tree decomposition of $G$. Then, there is a node $t\in V(T)$  such that $C\subseteq X_t$. 
\end{proposition}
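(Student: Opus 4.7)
The plan is to establish the proposition by exploiting the Helly property for subtrees of a tree, applied to the family of subtrees $T_v := \{t\in V(T) \colon v\in X_t\}$ indexed by $v\in C$. By the third condition of a tree decomposition, each $T_v$ induces a connected subtree of $T$. Since $C$ is a clique, for every pair of distinct vertices $u,v\in C$ the edge $\{u,v\}$ lies in $E(G)$, so by the second condition there is some bag containing both, i.e.\ $T_u\cap T_v\neq \emptyset$. Thus the family $\{T_v\}_{v\in C}$ consists of pairwise intersecting subtrees of a tree, and the Helly property then guarantees a common node $t\in \bigcap_{v\in C} T_v$, which by definition satisfies $C\subseteq X_t$.

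The only non-trivial ingredient is the Helly property itself, so I would either cite it or include a short inductive argument on $|C|$. For the inductive step, assuming that any $k-1$ pairwise-intersecting subtrees share a node, take $k$ pairwise-intersecting subtrees $T_1,\ldots,T_k$; then $T_1',\ldots,T_{k-1}'$ defined by $T_i' := T_i\cap T_k$ are nonempty connected subtrees of $T_k$ (intersections of two subtrees of a tree are again subtrees when nonempty), and they are pairwise intersecting because any three subtrees of a tree that pairwise intersect share a common node (the easy base case $k=3$, proved by picking the unique node on the path between the meeting points). Applying the inductive hypothesis inside $T_k$ yields a common node for all $T_1,\ldots,T_k$.

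The main obstacle is really just the $k=3$ base case of the Helly property: given three pairwise-intersecting subtrees $T_1,T_2,T_3$ of $T$, pick $t_{12}\in T_1\cap T_2$, $t_{13}\in T_1\cap T_3$, $t_{23}\in T_2\cap T_3$, and consider the unique median node $m$ of the three paths among these in $T$. Each $T_i$ contains two of the points $t_{12},t_{13},t_{23}$ and hence the paths between them, which together with connectedness of $T_i$ forces $m\in T_i$. Once the Helly property is in hand, the proposition follows immediately as described above. I would keep the exposition brief, since this is a classical fact, and either cite it from the textbook reference already in the paper or include the short self-contained proof just sketched.
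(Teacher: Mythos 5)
The paper does not prove this statement at all; it is imported as a known fact with a citation to the Downey--Fellows textbook, so there is no in-paper argument to compare against. Your proof is correct and is essentially the standard one: reduce to the Helly property for subtrees of a tree via the family $T_v=\{t\in V(T)\colon v\in X_t\}$, using the connectivity condition to get subtrees and the edge-coverage condition (plus $C$ being a clique) to get pairwise intersection. Your treatment of the Helly property itself is also sound -- the $k=3$ base case via the median of $t_{12},t_{13},t_{23}$ and the inductive step via the intersections $T_i\cap T_k$ (which are again connected subtrees when nonempty) are both standard and complete. For what it is worth, there is a slightly slicker direct argument that avoids stating Helly separately: root $T$ arbitrarily, let $r_v$ be the highest node of $T_v$, pick $v^*\in C$ with $r_{v^*}$ deepest, and observe that every $T_u$ for $u\in C$ must contain $r_{v^*}$ since it meets the subtree below $r_{v^*}$ yet its own highest node is no deeper. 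Either route is fine; since the paper simply cites the textbook, your write-up is if anything more self-contained than necessary.
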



%
%
%

%
\begin{definition}[Clique tree decomposition]
A clique tree decomposition of a graph $G$ is a tree decomposition ${\cal T}=(T,\{X_t\}_{t\in V(T)})$
where $X_t$ is a clique in $G$ for all $t\in V(T)$.  
\end{definition}

\begin{proposition} [\cite{golumbic2004algorithmic}]A graph is chordal if and only if it has a clique tree decomposition.
\end{proposition}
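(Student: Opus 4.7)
The plan is to establish the biconditional by proving each direction separately, relying on Proposition~\ref{prop:clique_in_a_bag} for one direction and on a perfect elimination ordering for the other.

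For the direction that a clique tree decomposition implies chordality, suppose $G$ has a clique tree decomposition $(T,\{X_t\}_{t\in V(T)})$, and, for contradiction, that $G$ contains an induced cycle $v_1 v_2 \cdots v_m v_1$ with $m\ge 4$. For each $i$ define $T_i = \{t\in V(T): v_i\in X_t\}$; by the third tree-decomposition axiom each $T_i$ is a connected subtree of $T$. Because every bag is a clique, any two cycle vertices sharing a bag are adjacent, so $T_i\cap T_j=\emptyset$ whenever $|i-j|\not\equiv \pm 1\pmod m$, while Proposition~\ref{prop:clique_in_a_bag} applied to each edge $\{v_i,v_{i+1}\}$ yields some $t_i\in T_i\cap T_{i+1}$. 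I would then derive a contradiction as follows: the disjoint subtrees $T_1$ and $T_3$ are joined by a unique shortest path in $T$ with endpoints $a\in T_1$ and $b\in T_3$; since $T_2$ meets both $T_1$ and $T_3$ and is itself a subtree, $T_2$ contains the entire $a$-to-$b$ subpath. The case $m=4$ closes immediately, as $T_4$ also contains the same $a$-to-$b$ subpath, forcing $T_2\cap T_4\ne\emptyset$, a contradiction. For $m\ge 5$, the same Helly-type argument applied to a well-chosen diagonal of the cycle yields an analogous contradiction.

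For the converse, I would invoke the classical fact that every chordal graph admits a perfect elimination ordering $v_1,\ldots,v_n$, so that $C_i := \{v_i\}\cup (N(v_i)\cap\{v_{i+1},\ldots,v_n\})$ is a clique for every $i$. Build a rooted forest on the bags $\{C_1,\ldots,C_n\}$ by attaching, for each $i$ such that $v_i$ has at least one later neighbor, the bag $C_i$ as the child of $C_{p(i)}$, where $p(i)$ is the smallest index $j>i$ with $v_j\in N(v_i)$; the remaining roots are then joined arbitrarily to form a single tree $T$. Each bag is a clique by construction, and every edge $v_iv_j$ with $i<j$ lies in $C_i$, so the first two tree-decomposition axioms hold. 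Contracting any child-parent pair in which $C_i\subseteq C_{p(i)}$ then leaves a clique tree decomposition whose bags are exactly the maximal cliques of $G$.

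The main obstacle is verifying the running-intersection axiom in the converse direction: for every vertex $v=v_j$, the bags containing $v$ must induce a connected subtree. Suppose $v_j\in C_i$ for some $i<j$. If $p(i)=j$, then $v_j = v_{p(i)}\in C_{p(i)}$. Otherwise, $v_{p(i)}$ and $v_j$ both lie in the clique $N(v_i)\cap\{v_{i+1},\ldots,v_n\}$, so they are adjacent and $v_j\in N(v_{p(i)})\cap\{v_{p(i)+1},\ldots,v_n\}\subseteq C_{p(i)}$. Iterating, the chain $C_i\to C_{p(i)}\to\cdots\to C_j$ in $T$ consists entirely of bags containing $v_j$; since every bag containing $v_j$ is linked to $C_j$ by such a chain, the set $\{t: v_j\in X_t\}$ is connected, completing the argument.
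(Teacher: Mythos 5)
The paper does not prove this proposition at all; it is quoted from Golumbic's book, so there is no ``paper proof'' to match. Your argument is the standard textbook one: the forward direction realizes $G$ as an intersection graph of subtrees of a tree (Gavril's characterization), and the converse builds the clique tree from a perfect elimination ordering. The converse direction as you wrote it is complete and correct: the parent map $p(i)$ strictly increases the index and is bounded above by $j$ whenever $v_j\in C_i$, so the parent chain from $C_i$ terminates at $C_j$ with every intermediate bag containing $v_j$, and since no bag $C_h$ with $h>j$ can contain $v_j$, the running-intersection axiom holds; joining the roots arbitrarily only adds edges and cannot disconnect any of these sets.

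The one genuine soft spot is the forward direction for $m\ge 5$, which you dismiss as ``the same Helly-type argument applied to a well-chosen diagonal.'' Picking a single diagonal is not by itself enough: for the pair $T_1,T_3$ the only \emph{individual} subtree meeting both is $T_2$, so you do not immediately get two disjoint subtrees both forced to contain the connecting path. The missing observation is that the union $T_4\cup T_5\cup\dots\cup T_m$ is itself a connected subgraph of $T$ (consecutive subtrees intersect) that meets $T_3$ via $T_4$ and $T_1$ via $T_m$; by the same edge-cut argument you used for $m=4$, any connected subgraph meeting two disjoint subtrees contains the unique shortest path $P$ between them, so both $T_2$ and $T_4\cup\dots\cup T_m$ contain $P$, whence $T_2\cap T_j\neq\emptyset$ for some $j\in\{4,\dots,m\}$, contradicting chordlessness of the cycle. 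With that one sentence added, your proof is complete; without it, the $m\ge 5$ case is asserted rather than proved.
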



Next we define the notion of $(c,\ell)$-semi clique and then define $(c,\ell)$-semi clique tree decomposition.

\begin{definition}
A graph $G$ is called an $(c,\ell)$-semi clique if there is a partition $C\uplus N$ of $V(G)$ such that $G[C]$ is a union of at most $c$ cliques and $|N| \leq \ell$.
\end{definition}


\begin{definition}[$(c,\ell)$-semi clique tree decomposition]
For a graph $G$ and $c,\ell\in {\mathbb N}$,  
a tree decomposition ${\cal T}=(T,\{X_t\}_{t\in V(T)})$ of $G$ is a 
$(c,\ell)$-semi clique tree decomposition if  $G[X_t]$ is a $(c,\ell)$-semi clique for each $t\in V(T)$. 
\end{definition}

We use the following lemma in Section~\ref{sec:Tree Decom}.

\begin{proposition}[\cite{fomin2015parameterized}] \label{lemma:tree-sep}
Let $T$ be a tree and  $x,y,z \in V(T)$. Then there exists a vertex $v \in V(T)$ such that every connected component of $T - v $ has at most one vertex from $\{x, y,z\}$.
\end{proposition}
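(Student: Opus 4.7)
The plan is to identify a single vertex $v$ that lies on all three pairwise paths among $x$, $y$, $z$ in $T$ --- the Steiner point (or median) of these three vertices --- and then argue that deleting $v$ splits the tree so that no component can contain two of them.

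First, I would root $T$ arbitrarily at $x$ and define $v$ to be the least common ancestor of $y$ and $z$ under this rooting. Equivalently, $v$ is the last vertex shared by the unique $x$--$y$ path and the unique $x$--$z$ path when both are traversed starting from $x$. A short argument using the uniqueness of paths in a tree then shows that $v$ also lies on the unique $y$--$z$ path: concatenating the sub-path of $P_{xy}$ from $v$ to $y$ with the sub-path of $P_{xz}$ from $v$ to $z$ yields a walk from $y$ to $z$ through $v$, which by uniqueness must be $P_{yz}$. Thus $v$ sits on all three of $P_{xy}$, $P_{xz}$, and $P_{yz}$.

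Next, I would analyse the components of $T - v$. If $v \notin \{x, y, z\}$, then the edges of $T$ incident to $v$ that lie on the three paths toward $x$, $y$, $z$ leave $v$ through three distinct neighbours; this distinctness follows because $v$ is the divergence point of the paths from $x$ to $y$ and from $x$ to $z$, combined with $v$ lying on $P_{yz}$. Hence $x$, $y$, and $z$ land in three different components of $T - v$. If instead $v$ equals one of the three vertices, say $v = x$, then $x$ is no longer present in $T - v$, and the same divergence argument shows that $y$ and $z$ lie in different subtrees hanging off $v$; the symmetric cases $v = y$ or $v = z$ are handled identically.

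The conclusion is a routine structural fact about trees, so no step presents a serious obstacle; the only care needed is in the boundary cases where $v$ coincides with one of $x, y, z$, or where two of the given vertices coincide, but each of these reduces directly to the main argument above.
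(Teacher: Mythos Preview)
The paper does not supply its own proof of this proposition; it is quoted as a known fact from the cited reference and used as a black box (in the proof of Lemma~\ref{lemma:3cliqsep}). Your argument via the median (Steiner point) of $x,y,z$ is correct and is the standard proof of this folklore fact: rooting at $x$ and taking $v$ to be the least common ancestor of $y$ and $z$ indeed places $v$ on all three pairwise paths, and your divergence reasoning shows that the neighbours of $v$ toward $x$, $y$, and $z$ are pairwise distinct whenever $v\notin\{x,y,z\}$, so the three vertices land in distinct components of $T-v$. The boundary cases you flag (where $v$ coincides with one of the three vertices, or where two of $x,y,z$ coincide) are handled exactly as you describe. There is nothing to compare against in the paper itself.
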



%


For definitions and notions on parameterized complexity, we refer to \cite{cygan2015parameterized}.



 \medskip\noindent\textbf{SETH.}
 For $q\geq 3$, let $\delta_q$ be the infimum of the set of constants $c$ for which there exists an algorithm solving $q$-SAT with $n$ variables and $m$ clauses in time $2^{cn}\cdot m^{\OO(1)}$.  
The {\em{Strong Exponential-Time Hypothesis} (SETH)}
conjectures  that $\lim_{q\to \infty}\delta_q=1$. SETH implies that {\sc CNF-SAT} on $n$ variables cannot be solved in $\OO^*((2- \epsilon)^{n})$ time for any $\epsilon>0$.

We define {\sc Node Multiway Cut} problem where we are given an input graph $G = (V, E)$, a set $T \subseteq V$ of terminals and an integer $k$. We want to ask whether there exist a set $X \subseteq V \setminus T$ of size at most $k$ such that any path between two different terminals intersects $X$.

\section{Semi Clique Tree Decomposition} \label{sec:Tree Decom}
Given a graph $G$ such that it contains a CVD of size $k$, our aim is to construct a $(4, 7k+5)$-semi clique tree decomposition $\mathcal{T}$ of $G$. 
 We loosely follow the ideas used for the tree decomposition algorithm in ~\cite{cygan2015parameterized} to construct a tree decomposition of a graph $G$ of width at most $4\tw(G)+4$, where $\tw(G)$ is the tree-width of $G$. 
But before that we propose the following lemmas that we use in getting the required $(4, 7k+5)$-semi clique tree decomposition.

\begin{lemma}
\label{lem:existence}
Let $G$ be a graph having a CVD of size $k$. Then $G$ has a $(1,k)$-semi clique tree decomposition. 
\end{lemma}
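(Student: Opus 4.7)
The plan is to build the required decomposition directly from a clique tree of the chordal part. Let $S\subseteq V(G)$ be a CVD of size $k$, so that $G'=G-S$ is chordal. By the proposition attributed to Golumbic, $G'$ admits a clique tree decomposition $\mathcal{T}'=(T,\{Y_t\}_{t\in V(T)})$, i.e.\ each $Y_t$ induces a clique in $G'$ (and hence in $G$). I would then define $\mathcal{T}=(T,\{X_t\}_{t\in V(T)})$ with the same underlying tree $T$ by setting
\[
X_t \;=\; Y_t \;\uplus\; S
\]
for every $t\in V(T)$. Writing $C_t:=Y_t$ and $N_t:=S$, we have $G[C_t]$ is a single clique and $|N_t|=|S|\le k$, so each bag is a $(1,k)$-semi clique with the required partition.

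What remains is to verify that $\mathcal{T}$ is a valid tree decomposition of $G$. Coverage of vertices is immediate: the bags of $\mathcal{T}'$ cover $V(G)\setminus S$, and $S$ has been added to every bag. For the edge condition, I split edges of $G$ into three kinds. Edges inside $G-S$ are already covered by some $Y_t$, hence by the same $X_t$; edges with both endpoints in $S$ lie in every bag since $S\subseteq X_t$ for all $t$; and an edge $\{u,s\}$ with $u\in V(G)\setminus S$ and $s\in S$ is covered by any bag $X_t$ with $u\in Y_t$, which exists because $u$ is covered in $\mathcal{T}'$. Finally, for the subtree-connectivity condition, each $u\in V(G)\setminus S$ appears in exactly the same set of bags as in $\mathcal{T}'$, which induces a connected subtree of $T$ by the validity of $\mathcal{T}'$; and each $s\in S$ appears in \emph{every} bag, which trivially induces the whole tree $T$, also connected.

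There is essentially no obstacle here — the argument is a one-line construction followed by a three-case check of the tree decomposition axioms. The only mild subtlety worth flagging is that the partition $C_t\uplus N_t$ is given explicitly by the construction (not merely asserted to exist), which is what the downstream algorithm will need; this falls out for free because $Y_t\cap S=\emptyset$ by definition of $G'=G-S$.
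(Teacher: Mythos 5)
Your proof is correct and follows exactly the construction the paper intends (the paper states this lemma without a written proof, but its techniques overview describes precisely this argument: take a clique tree decomposition of the chordal graph $G-S$ and add the modulator $S$ to every bag). The three-case verification of the tree decomposition axioms and the explicit partition $C_t\uplus N_t$ with $Y_t\cap S=\emptyset$ are all as expected, so there is nothing to add.
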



\begin{lemma} \label{lemma:maxicliq}
For a graph $G$ on $n$ vertices with a CVD of size $k$, the number of maximal cliques in $G$ are bounded by $\OO(2^k \cdot n)$. Furthermore, there is an algorithm that given any graph $G$ either concludes that there is no CVD of size $k$ in $G$ or enumerates all the the maximal cliques of $G$ in $\OO(2^k \cdot n^{\omega+1})$ time where $\omega$ is the matrix multiplication exponent. 
\end{lemma}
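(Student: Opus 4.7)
The plan is to prove the counting bound first by directly exploiting the promised CVD, and then realise the algorithm by invoking an off-the-shelf polynomial-delay maximal clique enumerator with early termination.

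For the counting bound, I would assume a CVD $S \subseteq V(G)$ of size $k$ and, for any maximal clique $K$ of $G$, set $X := K \cap S$ and $V_X := \{u \in V(G)\setminus S : \{u,v\} \in E(G) \text{ for every } v \in X\}$. Then $K \setminus S \subseteq V_X$ and I claim it is in fact a \emph{maximal} clique of $G[V_X]$: if some $u \in V_X \setminus (K\setminus S)$ were adjacent to every vertex of $K \setminus S$, then, being adjacent to all of $X$ by definition of $V_X$, it would extend $K$, contradicting its maximality. Since $V_X \subseteq V(G)\setminus S$ and $G - S$ is chordal, the induced subgraph $G[V_X]$ is chordal, and a chordal graph on $m$ vertices has at most $m$ maximal cliques. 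Grouping the maximal cliques of $G$ according to $X = K \cap S$ (and adding the at most $2^k$ maximal cliques entirely contained in $S$, for which $K\setminus S = \emptyset$) yields a total of at most $2^k(n+1) = \OO(2^k n)$ maximal cliques in $G$.

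For the algorithm I would invoke the polynomial-delay maximal clique enumeration algorithm of Makino and Uno, which lists the maximal cliques of an $n$-vertex graph with delay $\OO(n^{\omega})$ per clique. Run it on $G$ while maintaining a counter of the cliques output so far; as soon as this counter exceeds the threshold $2^k(n+1)$, halt and declare (correctly, by the contrapositive of the counting bound) that $G$ has no CVD of size $k$. Otherwise the enumerator terminates on its own and we output the complete list. In either case the number of cliques examined is $\OO(2^k n)$, so the total running time is $\OO(2^k n \cdot n^\omega) = \OO(2^k n^{\omega+1})$ as required.

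The main obstacle is really the choice of enumeration subroutine rather than any combinatorial subtlety: the $n^{\omega}$ factor in the claimed bound essentially forces the use of a matrix-multiplication-based enumerator (Makino--Uno) rather than the older $\OO(nm)$-delay algorithm of Tsukiyama et al., and one must verify that the guarantee is genuinely a per-clique delay bound compatible with early termination, so that stopping after $\OO(2^k n)$ outputs costs only $\OO(2^k n^{\omega+1})$ in total. All other ingredients are the two standard facts about chordal graphs used above.
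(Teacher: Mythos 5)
Your proposal is correct and follows essentially the same route as the paper: both bound the number of maximal cliques by $2^k\cdot\OO(n)$ by fixing the intersection $X$ of a maximal clique with the modulator $S$ and invoking the linear bound on maximal cliques of a chordal graph, and both realise the algorithm via the Makino--Uno $\OO(n^{\omega})$-delay enumerator with early termination once the count exceeds the bound. The only (immaterial) difference is the bookkeeping for the $\OO(n)$ completions per subset $X\subseteq S$: you count maximal cliques of the chordal common-neighbourhood graph $G[V_X]$, while the paper maps each completion into a maximal clique $Q$ of $G-S$ and argues that at most one subset of $Q$ can extend $X$ to a maximal clique of $G$.
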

\begin{proof}
Let $X\subseteq V(G)$ be of size at most $k$ such that $G - X$ is a chordal graph. For any maximal clique $C$ in $G$ 
let $C_{X}=C\cap X$ and $C_{G - X}=C\setminus X$.
Since $ G - X$ is a chordal graph, it has only $\OO(n)$ maximal cliques ~\cite{golumbic2004algorithmic}. 

We claim that for a subset $C_X \subseteq X$ and a maximal clique $Q$ in $G-X$, there is at most one subset $Q'\subseteq Q$ such that  $C_X \cup Q'$ forms a maximal clique in $G$. 
If there are two distinct subsets $Q_1, Q_2$ of $Q$ such that $C_X\cup Q_1$ and $C_X\cup Q_2$ are cliques in $G$, then $C_X\cup Q_1\cup Q_2$ is a clique larger than the cliques $C_X\cup Q_1$ and $C_X\cup Q_2$. 
Thus, since there are at most $2^k$ subsets of $X$ and at most $\OO(n)$ maximal cliques in $G$, the total number of maximal cliques in $G$ is upper bounded by $\OO(2^kn)$. 




There is an algorithm that given a graph $H$, enumerates all the maximal cliques of $H$ with $\OO(\vert V(H)\vert^{\omega})$ delay(the maximum time taken between outputting two consecutive solutions)~\cite{makino2004new}. If $G$ has a CVD of size $k$, there are at most $\OO(2^k \cdot n)$ maximal cliques in $G$ which can be enumerated in $\OO(2^k \cdot n^{\omega+1})$ time. Else, we note that the number of maximal cliques enumerated is more than $\OO(2^k \cdot n)$ and hence return that $G$ has no CVD of size $k$.
\end{proof}

\begin{lemma} \label{lemma:balsep}
Let $G$ be a graph having a CVD of size $k$ and $w: V(G) \rightarrow \mathbb{R}_{\geq 0}$ be a weight function on $V(G)$. There exists a $\frac{2}{3}$-balanced separation $(A,B)$ of $G$ with respect to $w$ such that the graph induced on the corresponding separator $G[A \cap B]$ is a $(1,k)$-semi clique.
\end{lemma}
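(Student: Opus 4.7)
The plan is to invoke Lemma~\ref{lem:existence} to obtain a $(1,k)$-semi clique tree decomposition $\mathcal{T}=(T,\{X_t\}_{t\in V(T)})$ of $G$, pick a single bag $X_t$ to serve as the separator, and then distribute the components of $G-X_t$ between the two sides of the separation. Every bag of $\mathcal{T}$ is already a $(1,k)$-semi clique, so any choice of bag yields a separator of the required structural form; the content of the proof is to pick a \emph{balanced} bag and then to split the remaining components appropriately.

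To pick the bag, I will run the standard centroid argument on $\mathcal{T}$ to produce $t\in V(T)$ such that every connected component of $G-X_t$ has $w$-weight at most $w(V(G))/2$. For each edge $e=\{t_1,t_2\}\in E(T)$ let $T_1,T_2$ be the two components of $T-e$ and set $V_i=\bigcup_{s\in V(T_i)}X_s$. The connected-subtree axiom of tree decompositions forces $V_1\cap V_2=X_{t_1}\cap X_{t_2}$, so $V_1\setminus V_2$ and $V_2\setminus V_1$ are disjoint and their weights together with $w(X_{t_1}\cap X_{t_2})$ sum to $w(V(G))$. Orient $e$ towards the heavier of the two sides (breaking ties arbitrarily). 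An oriented tree always has a node of out-degree $0$; take such a $t$. For every edge $\{t,t'\}$ incident to $t$, the orientation is into $t$, so the side not containing $t$ has weight at most $w(V(G))/2$, and since each component of $G-X_t$ is contained in exactly one such side minus $X_t$, every component of $G-X_t$ has weight at most $w(V(G))/2$.

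Set $S=X_t$ and let $C_1,\ldots,C_m$ be the components of $G-S$. I now partition $\{C_1,\ldots,C_m\}$ into two groups $A',B'$ with $w(A'),w(B')\le \tfrac{2}{3}w(V(G))$. If some $C_j$ has $w(V(C_j))\ge w(V(G))/3$, place $C_j$ alone on one side (weight at most $w(V(G))/2\le \tfrac{2}{3}w(V(G))$) and all remaining components on the other (weight at most $w(V(G))-w(V(C_j))\le \tfrac{2}{3}w(V(G))$). Otherwise every component has weight strictly less than $w(V(G))/3$, and adding components one at a time to the currently lighter side keeps the imbalance below $w(V(G))/3$, so neither side ends with weight greater than $(w(V(G))+w(V(G))/3)/2=\tfrac{2}{3}w(V(G))$. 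Taking $A=A'\cup S$ and $B=B'\cup S$ produces the claimed $\tfrac{2}{3}$-balanced separation, with $A\cap B=S=X_t$ a $(1,k)$-semi clique.

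The main obstacle is the centroid step: one has to keep the set algebra $V_1\cap V_2=X_{t_1}\cap X_{t_2}$ straight and to verify that the sink of the orientation really does cut $G$ into pieces of weight at most $w(V(G))/2$. Once this is in place, the two-part distribution of components is elementary bookkeeping.
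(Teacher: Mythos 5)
Your proposal is correct and follows essentially the same route as the paper: both invoke Lemma~\ref{lem:existence} to get a $(1,k)$-semi clique tree decomposition, select a bag $X_t$ so that every component of $G-X_t$ has weight at most $\frac{1}{2}w(V(G))$, and then group the components into two sides of weight at most $\frac{2}{3}w(V(G))$ each. The only differences are in interchangeable standard sub-arguments --- you locate the balanced bag by orienting tree edges toward the heavier side and taking a sink, where the paper takes the deepest node whose subtree has weight exceeding $\frac{1}{2}w(V(G))$, and you split the components by a greedy/case analysis where the paper takes a prefix of the components sorted by decreasing weight --- both of which are sound.
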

\begin{proof}
First we  prove that there is a $\frac{1}{2}$-balanced separator $X$ such that $G[X]$ is a $(1,k)$-semi clique. 
By Lemma~\ref{lem:existence}, there is a $(1,k)$-semi clique tree decomposition ${\cal T}=(T,\{X_t\}_{t\in V(T)})$ of $G$. 
Arbitrarily root the tree of $T$ at a node $r \in V(T)$. For any node $y \in V(T)$,  let $T_y$ denote the subtree of $T$ rooted at node $y$ and $G_y$ denote the graph induced on the vertices of $G$ present in the bags of nodes of $T_y$. That is $V(G_y)=\bigcup_{t\in V(T_y)}X_t$. 
Let $t$ be the farthest node of $T$ from the root $r$
such that $w(V(G_t)) > \frac{1}{2} w(V(G))$. 
That is, for all nodes $t' \in V(T_t)\setminus \{t\}$, we have that  
$w(V(G_{t'})) \leq \frac{1}{2} w(V(G))$.

We claim that $X=X_t$ is a  $\frac{1}{2}$-balanced separator of $G$. 
Let $t_1, \dotsc, t_p$ be the children of $t$. Since $X$ is a bag of the tree decomposition ${\cal T}$, all the connected components of $G - X$ are contained either in $G_{t_i} - X$ or $G[V(G)\setminus V(G_t)]$. Since 
$w(V(G_t)) > \frac{1}{2} w(V(G))$,
 we have $w(V(G) \setminus V(G_x)) < \frac{1}{2} w(V(G))$.
 By the choice of $t$, we have $w(V(G_{t_i})) \leq \frac{1}{2} w(V(G))$ for all $i \in [p]$.

Now we define a $\frac{2}{3}$-balanced separation $(A,B)$ for $G$ such that the set $X=A\cap B$ ($\frac{1}{2}$ balanced separator).
Let $D_1, \dotsc, D_q$ be the vertex sets of the connected components of $G - X$.
Let $a_i = w(D_i)$ for all $i\in [q]$. Without loss of generality, assume  that $a_1 \geq \dotsc \geq a_q$. Let $q'$ be the smallest index such that $\sum_{i=1}^{q'} a_i \geq \frac{1}{3} w(V(G))$ or $q'=q$ if no such index exists. 
Clearly, $\sum_{i=q'+1}^{q} a_i \leq \frac{2}{3} w(V(G))$. We prove that $\sum_{i=1}^{q'} a_i \leq \frac{2}{3} w(V(G))$.
If $q'=1$, $\sum_{i=1}^{q'} a_i = a_{q'} \leq \frac{1}{2} w(V(G))$ and we are done. Else, since $q'$ is the smallest index such that $\sum_{i=1}^{q'} a_i \geq \frac{1}{3} w(V(G))$, we have $\sum_{i=1}^{q'-1} a_i < \frac{1}{3} w(V(G))$. We also note that $a_{q'} \leq a_{q'-1} \leq \sum_{i=1}^{q'-1} a_i < \frac{1}{3} w(V(G))$. Hence  $\sum_{i=1}^{q'} a_i = \sum_{i=1}^{q'-1} a_i + a_{q'} \leq \frac{2}{3} w(V(G))$.

%
%
Now we define $A = X\cup \bigcup_{ i \in [q]} D_i$ and $B = X\cup \bigcup_{ i \in [q] \setminus [q']} D_i$. 
Notice that $X=A\cap B$ and $(A,B)$ is a separation of $G$. 
%
Also notice that 
$w(A \setminus B) = \sum_{i=1}^{q'} a_i \leq \frac{2}{3} w(V)$ and $w(B \setminus A) = \sum_{i=q'+1}^{q} a_i \leq w(V(G)) - \frac{1}{3} w(V(G)) = \frac{2}{3}w(V(G))$ as $\sum\limits_{i=1}^{q'} a_i \geq \frac{1}{3} w(V(G))$. 
Since $X$ is a bag of the tree decomposition ${\cal T}$, $G[X]$ is a $(1,k)$-semi clique. 
\end{proof}

Using Lemmas \ref{lemma:maxicliq} and \ref{lemma:balsep}, we obtain the following corollary.

\begin{corollary} \label{corollary:separation4k}
Let $G$ be a graph with a CVD of size $k$. Let $N \subseteq V(G)$ 
 with $5k+3 \leq |N| \leq 6k+4$. Then there exists a partition $(N_A, N_B)$ of $N$ and a vertex subset $X \subseteq V(G)$ satisfying the following properties.
\begin{itemize}
\item $|N_A|, |N_B| \leq 4k+2$. 
\item $X$ is a vertex separator of $N_A$ and $N_B$ in the graph $G$.
\item $G[X]$ is a $(1,k)$-semi clique.
\end{itemize}
Moreover, there is an algorithm that given any graph $G$, either concludes that there is no CVD of size $k$ in $G$ or computes such a partition $(N_A, N_B)$ of $N$ and the set $X$ 
 in $\OO(2^{7k} \cdot (kn^{3} +n^{\omega+1}) )$ time.
\end{corollary}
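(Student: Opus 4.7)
My plan is to first establish existence of the triple $(N_A, N_B, X)$ by applying Lemma \ref{lemma:balsep} with a carefully chosen weight function, and then turn the existence result into an algorithm by brute-force enumeration combined with a min vertex-cut computation.

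For existence, I would define the weight function $w : V(G) \to \mathbb{R}_{\geq 0}$ by $w(v) = 1$ if $v \in N$ and $w(v) = 0$ otherwise, so that $w(V(G)) = |N| \leq 6k+4$. By Lemma \ref{lemma:balsep}, there is a $\tfrac{2}{3}$-balanced separation $(A^*,B^*)$ with respect to $w$ such that $G[X]$ is a $(1,k)$-semi clique, where $X := A^* \cap B^*$. Setting $N_A^0 := N \cap (A^* \setminus B^*)$ and $N_B^0 := N \cap (B^* \setminus A^*)$, the balanced separation bound gives $|N_A^0|, |N_B^0| \leq \lfloor \tfrac{2}{3}(6k+4)\rfloor = 4k+2$. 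To turn this into a partition of $N$, I would distribute the remaining vertices $N \cap X$ greedily between the two sides; since $|N_A^0| + |N_B^0| + |N \cap X| = |N| \leq 6k+4 \leq 8k+4$, both sides can be kept within the $4k+2$ budget. Putting such vertices on one side does not destroy the separator property of $X$, because any path leaving a vertex $v \in N_A \cap X$ trivially contains a vertex of $X$, namely $v$ itself. The resulting partition $(N_A, N_B)$ together with $X$ satisfies all three conditions.

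For the algorithmic statement, I would enumerate candidates for $(N_A, N_B)$ and for the ``clique part'' of $X$, and compute the ``small part'' via a min-cut. First invoke Lemma \ref{lemma:maxicliq} to enumerate all maximal cliques of $G$ in $\OO(2^k n^{\omega+1})$ time; if more than the guaranteed $\OO(2^k n)$ cliques appear, conclude that $G$ has no CVD of size $k$. Otherwise, loop over every ordered partition $(N_A, N_B)$ of $N$ with $|N_A|, |N_B| \leq 4k+2$ (at most $2^{|N|} \leq 2^{6k+4}$ choices) and over every maximal clique $C_0$ (at most $\OO(2^k n)$ choices). For each such pair, compute a minimum vertex set $N' \subseteq V(G)\setminus C_0$ of size at most $k$ such that $G - (C_0 \cup N')$ has no path between $N_A \setminus (C_0\cup N')$ and $N_B \setminus (C_0\cup N')$; this is a standard vertex min-cut problem in the node-split flow network, solvable by running at most $k+1$ augmenting-path iterations in $\OO(kn^2)$ time. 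If such an $N'$ is found, output $(N_A, N_B, X := C_0 \cup N')$; the set $G[X]$ is a $(1,k)$-semi clique because $C_0$ is a clique and $|N'|\leq k$, and $X$ separates $N_A$ from $N_B$ in $G$ since every $N_A$--$N_B$ path either hits $C_0$ or lies in $G - C_0$ and is hit by $N'$.

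Correctness follows from the existential step: given an existential triple $(N_A^*, N_B^*, X^* = C^* \uplus N^*)$ with $C^*$ a clique and $|N^*|\leq k$, extend $C^*$ to any maximal clique $C_0 \supseteq C^*$ of $G$; then $N^* \setminus C_0$ is a witness separator of size at most $k$ in $G - C_0$ for the iteration on $(N_A^*, N_B^*, C_0)$, so the algorithm succeeds for that guess. The total running time is $\OO(2^k n^{\omega+1})$ for clique enumeration plus $2^{6k+4} \cdot \OO(2^k n) \cdot \OO(kn^2) = \OO(2^{7k} \cdot kn^3)$ for the search, matching the stated bound. The main delicate point is formulating the vertex min-cut so that separator vertices are allowed to lie inside $N_A \cup N_B$ (handled by giving every vertex unit capacity in the node-splitting gadget), and verifying that enlarging the clique part of the separator from $C^*$ to an arbitrary maximal clique $C_0 \supseteq C^*$ preserves the separation property — both follow because supersets of separators remain separators and $C_0$ remains a clique.
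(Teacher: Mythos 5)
Your proposal is correct and follows essentially the same route as the paper: existence via Lemma~\ref{lemma:balsep} with the indicator weight function on $N$, and an algorithm that enumerates maximal cliques via Lemma~\ref{lemma:maxicliq}, guesses the $2$-partition of $N$, and finds the residual size-$\leq k$ separator by a unit-capacity vertex min-cut. Your treatment is in fact slightly more careful than the paper's on two minor points it glosses over — assigning the vertices of $N\cap X$ to the two sides so that $(N_A,N_B)$ is a genuine partition of $N$, and allowing cut vertices to lie inside $N_A\cup N_B$ in the flow formulation.
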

\begin{proof}
Let us define a weight function $w: V(G) \rightarrow \mathbb{R}_{\ge 0}$ such that $w(v)=1$ if $v \in N$ and $0$ otherwise. From Lemma \ref{lemma:balsep}, we know that there exists a pair of vertex subsets $(A,B)$ which is the balanced separation of $G$ with respect to $w$ where the graph induced on the corresponding separator $G[A \cap B]$ is a $(1,k)$ semi clique. 


Let us define the partition $(N_A, N_B)$. We add $(A \setminus B) \cap N$ to $N_A$ and $(B \setminus A) \cap N$ to $N_B$. Since $(A,B)$ is a balanced separation of $G$ with respect to $w$, $|(A \setminus B) \cap N|, |(B \setminus A) \cap N| \leq \frac{2}{3}|N| \leq 4k+2$. 
This shows the existence of subsets $N_A, N_B$ and $X= A \cap B$. But the proof is not constructive as the existence of $(A,B)$ uses the $(1,k)$-semi clique tree decomposition of $G$ which requires the chordal vertex deletion.



We now explain how to compute these subsets without the knowledge of a $(1,k)$-semi clique tree decomposition of $G$. Let $X = C'' \uplus N''$ where $C''$ is a clique and $|N''| \leq k$. 
We use Lemma~\ref{lemma:maxicliq} to either conclude that $G$ has no CVD of size $k$ or go over all maximal cliques of $G$ to find a maximal clique $D$ such that $C'' \subseteq D$. 
 We can conclude that in the remaining graph $G[V \setminus D]$, there exists a separator $Z \subseteq N'' = X \setminus C''$ of size at most $k$ for the sets $N_A$ and $N_B$. 


We go over all $2^{|N|} \leq 2^{6k+4}$ 2-partitions of $N$ to guess the partition $(N_A,N_B)$. Then we apply the classic Ford-Fulkerson maximum flow algorithm to find the separator $Z$ of the sets $N_A$ and $N_B$ in the graph $G[V \setminus D]$. If $|Z| > k$, we can conclude that $G$ has no CVD of size $k$ in $G$. 
Thus, we obtained a set $X' = D \uplus Z$ such that $G[X']$ is a $(1,k)$-semi clique and $X'$ is a vertex separator of $N_A$ and $N_B$ in the graph $G$.

Now we estimate the time taken to obtain these sets. We first go over all $\OO(2^k \cdot n)$ maximal cliques of the graph which takes $\OO(2^k \cdot n^{\omega+1})$ time. Then for each of the $\OO(2^k \cdot n)$ maximal cliques, we go over at most $2^{6k+4}$ guesses for $N_A$ and $N_B$. Finally we use the Ford-Fulkerson maximum flow algorithm to find the separator of size at most $k$ for $N_A$ and $N_B$ which takes $\OO(k(n+m))$ time. 
 Overall the running time is $\OO(2^k \cdot n^{\omega+1} + (2^kn) \cdot 2^{6k} \cdot (k(n+m))) = \OO(2^{7k} \cdot (kn^{3} +n^{\omega+1}) )$.
\end{proof}

\begin{lemma} \label{lemma:3cliqsep}
Let $G$ be a graph having a CVD of size $k$. Let $C_1, C_2, C_3$ be three distinct cliques in $G$. Then there exists a vertex subset $X\subseteq V(G)$ such that $G[X]$ is a $(1,k)$-semi clique
and  $X$ is a separator of  $C_i$ and $ C_j$ for all $i,j\in \{1,2,3\}$ and $i\neq j$. 
Moreover, there is an algorithm that given any graph $G$, either concludes that there is no CVD of size $k$ in $G$ or computes $X$ in $\OO(4^{k} \cdot (kn^{3} +n^{\omega+1}) )$ time.
\end{lemma}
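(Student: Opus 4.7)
My plan is to exhibit the separator $X$ as a bag of a $(1,k)$-semi clique tree decomposition of $G$ (for existence), and then, for the algorithm, to enumerate candidates for the clique part of $X$ and solve a residual three-terminal vertex separator problem in the complement, in close analogy with the proof of Corollary~\ref{corollary:separation4k}.

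For existence I would start with a $(1,k)$-semi clique tree decomposition ${\cal T}=(T,\{X_t\}_{t\in V(T)})$ of $G$ provided by Lemma~\ref{lem:existence}. Proposition~\ref{prop:clique_in_a_bag} then yields nodes $t_1,t_2,t_3\in V(T)$ with $C_i\subseteq X_{t_i}$, and Proposition~\ref{lemma:tree-sep} applied to $t_1,t_2,t_3$ supplies a node $v\in V(T)$ such that every connected component of $T-v$ contains at most one of the $t_i$. Setting $X:=X_v$, the bag $G[X]$ is a $(1,k)$-semi clique by definition of ${\cal T}$. For the separator property, a standard path-chasing argument on ${\cal T}$ shows that if a $C_i$--$C_j$ path $P$ in $G$ avoided $X_v$, then each vertex $w \in V(P)$ has its bag-subtree inside a single component of $T-v$; consecutive vertices of $P$ share a bag, so the whole path would live in one component of $T-v$, contradicting that $t_i$ and $t_j$ lie in distinct components.

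For the algorithm, write the (unknown) $X = C'' \uplus N''$ with $C''$ a clique and $|N''| \leq k$. First, apply Lemma~\ref{lemma:maxicliq} to either conclude that $G$ has no CVD of size $k$ in time $\OO(2^k n^{\omega+1})$ or list all $\OO(2^k n)$ maximal cliques of $G$. Since $C''$ is contained in some such maximal clique $D$, I loop over all maximal cliques as guesses of $D$. For the correct guess, $Z := X\setminus D \subseteq N''$ has $|Z|\leq k$ and pairwise separates $C_1\setminus D,\, C_2\setminus D,\, C_3\setminus D$ in $G[V\setminus D]$; conversely, any such $Z$ makes $X' := D\cup Z$ a $(1,k)$-semi clique and a pairwise separator of $C_1,C_2,C_3$ in $G$. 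After contracting each nonempty $C_i\setminus D$ to a single terminal, finding $Z$ is precisely a 3-terminal \textsc{Node Multiway Cut} problem with budget $k$, which can be solved in $\OO^*(2^k)$ time via LP-guided branching. Combining the $\OO(2^k n)$ candidate cliques $D$, the $\OO(2^k)$ multiway-cut branchings, and $\OO(k(n+m))$ per max-flow computation yields the claimed $\OO(4^k(k n^{3} + n^{\omega+1}))$ total running time; if no $D$ yields a valid $Z$, the existence part guarantees that $G$ has no CVD of size $k$, and we report this.

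The main obstacle is the 3-terminal \textsc{Node Multiway Cut} subroutine: a naive ``each vertex is either on the $C_i'$-side or in $Z$'' branching only gives a $3^k$ or $4^k$ bound per candidate, which would inflate the total to $6^k$ or $8^k$. Driving the per-candidate cost down to $2^k$ using the cut-pushing / LP-guided branching from the Node Multiway Cut literature is what makes the global $4^k$ bound go through, and the bookkeeping needed to compose this subroutine with the maximal-clique enumeration (in particular, verifying that the correct guess $D \supseteq C''$ really does leave a residual instance with a size-$|N''|$ multiway separator) is where most of the care will go. The rest is a routine combination of maximal-clique enumeration and Ford--Fulkerson max-flow, exactly as in the proof of Corollary~\ref{corollary:separation4k}.
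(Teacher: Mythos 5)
Your proposal is correct and follows essentially the same route as the paper: existence via a bag of the $(1,k)$-semi clique tree decomposition chosen by the tree-separator proposition, and the algorithm via enumerating the $\OO(2^k n)$ maximal cliques as candidates for the clique part of $X$ and solving a $3$-terminal \textsc{Node Multiway Cut} with budget $k$ in $\OO^*(2^k)$ time on the residual graph (the paper attaches three new terminal vertices rather than contracting, which is immaterial). Your observation that the $\OO^*(2^k)$ multiway-cut algorithm is what keeps the overall bound at $4^k$ matches the paper's use of the known $\OO(2^k km)$ algorithm.
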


\begin{proof}

By Lemma~\ref{lem:existence}, there is a $(1,k)$-semi clique tree decomposition ${\cal T}=(T,\{X_t\}_{t\in V(T)})$ of $G$. By Proposition~\ref{prop:clique_in_a_bag}, we know that 
there exist nodes $t_1,t_2,t_3\in V(T)$ such that $C_1  \subseteq X_{t_1}$,  $C_2  \subseteq X_{t_2}$ and $C_3 \subseteq X_{t_3}$. From Proposition~\ref{lemma:tree-sep}, we know that there exists a node $t\in V(T)$ such that 
 $(i)$ $t_1$, $t_2$ and $t_3$ are in different connected components of $T-t$. We claim that $X=X_t$ is the required separator. Since $X$ is a bag in the $(1,k)$-semi clique tree decomposition ${\cal T}$, $G[X]$ is a $(1,k)$-semi clique. Because of statement $(i)$, we have that $X$ is a separator of  $C_i$ and $ C_j$ for all $i,j\in \{1,2,3\}$ and $i\neq j$. The proof is not constructive as we do not have a $(1,k)$-semi clique tree decomposition of $G$.
%
%
%

We compute a set $X'$ such that $G[X']$ is a $(1,k)$-semi clique and  $X'$ is a separator of  $C_i$ and $ C_j$ for all $i,j\in \{1,2,3\}$ and $i\neq j$, without the knowledge of a $(1,k)$-semi clique tree decomposition of $G$.
Let $X= X_1 \uplus X_2$ where $X_1$ is a clique and $|X_2| \leq k$. Using Lemma~\ref{lemma:maxicliq}, we either conclude that $G$ has no CVD of size $k$ or we go over all the maximal cliques of the graph $G$. We know that $X_1 \subseteq D$ for one of such maximal cliques $D$.  Now in the graph $G[V \setminus D]$, we know that there exists a set $Z \subseteq X_2 = X \setminus X_1 $ of size at most $k$ which separates the cliques $C_x \setminus D, C_y \setminus D$ and $C_z \setminus D$.
To find $Z$, we add three new vertices $x'$, $y'$ and $z'$. We make $x'$ adjacent to all the vertices of $C_x \setminus D$, $y'$ adjacent to all the vertices of $C_y \setminus D$ and $z'$ adjacent to all the vertices of $C_z \setminus D$. 
 We find the node multiway cut $Y$ of size at most $k$ with the terminal set being $\{x', y', z'\}$. The set $Y$ can be found in $\OO(2^k km)$ using the known algorithm for node multiway cut \cite{DBLP:journals/toct/CyganPPW13,iwata20180}. If the algorithm returns that there is no such set $Y$ of size $k$, we conclude that there is no CVD of size at most $k$ in $G$. Else we get a set $X' = D \uplus Y$ which satisfies the properties of $X$. 

Now we estimate the time taken to obtain $X'$. We get all the $\OO(2^k \cdot n)$ maximal cliques of the graph in $\OO(2^k \cdot n^{\omega+1})$ time. Now for each maximal clique we use the $\OO(2^k km)$ algorithm for node multiway cut. Thus, the overall running time is $\OO(2^k \cdot n^{\omega+1} + (2^k n) \cdot (2^k km)) = \OO(4^{k} \cdot (kn^{3} +n^{\omega+1}) )$.
\end{proof}

Now  we prove our main result (i.e., Theorem~\ref{theorem:treedecomp}) in this section. For convenience we restate it here.  

\begin{thmn} [\ref{theorem:treedecomp}]
There is an algorithm that given a graph $G$ and an integer $k$ runs in time $\OO(2^{7k} \cdot (kn^{4} +n^{\omega+2}))$ and either constructs a $(4,7k+5)$-semi clique tree decomposition $\mathcal{T}$ of $G$ or concludes that there is no chordal vertex deletion set of size $k$ in $G$. Moreover, the algorithm also provides a partition $C_1\uplus C_2\uplus C_3\uplus C_4 \uplus N$ of each bag of ${\cal T}$ such that  $\vert N\vert \leq 7k+5$ and $C_i$ is a clique in $G$ for all $i\in \{1,2,3,4\}$. 

\end{thmn}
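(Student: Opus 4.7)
The plan is to emulate the standard recursive balanced-separator construction of tree decompositions due to Cygan et al.~\cite{cygan2015parameterized}, which produces a decomposition of width at most $4\tw+4$. In their scheme one keeps a small boundary $W$, finds a small balanced separator $S$, emits $W\cup S$ as a bag, and recurses on each component of the resulting subgraph with the restricted boundary joined to $S$. I would follow this template but replace the small-cardinality separator by a structured $(1,k)$-semi clique separator coming from either Corollary \ref{corollary:separation4k} or Lemma \ref{lemma:3cliqsep}.

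Concretely, the algorithm is a recursive routine $\mathrm{Decompose}(H,W)$ taking an induced subgraph $H\subseteq G$ and a boundary $W\subseteq V(H)$ that must lie in the root bag of the subtree being produced, with initial call $(G,\emptyset)$. The invariant maintained at entry is that $W$ is a $(3,6k+4)$-semi clique with a known partition $W = D_1\uplus D_2\uplus D_3\uplus N$ into cliques and a non-clique remainder. If $V(H)$ already fits in a $(4,7k+5)$-semi clique, output $V(H)$ as a single bag. Otherwise the call performs one of two moves. \emph{Move~A} is triggered when the non-clique part is large: pad $N$ with vertices from $V(H)\setminus W$ to reach $|N|\in[5k+3,6k+4]$ and invoke Corollary \ref{corollary:separation4k} to obtain a partition $(N_A,N_B)$ with $|N_A|,|N_B|\leq 4k+2$ and a separator $X = C_X\uplus N_X$ with $C_X$ a clique and $|N_X|\leq k$. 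Emit the bag $W\cup X$, which is a $(4,|N|+k)$-semi clique and hence a $(4,7k+5)$-semi clique, and recurse on each component of $H-X$ with boundary $X\cup(W\cap V(K))$. \emph{Move~B} is triggered when the three cliques $D_1, D_2, D_3$ all meet the same component of $H-N$: invoke Lemma \ref{lemma:3cliqsep} on $D_1, D_2, D_3$ to obtain a $(1,k)$-semi clique separator $Y$ pairwise-separating them, emit $W\cup Y$ as a $(4,|N|+k)$-semi clique bag, and recurse on each component of $H-Y$, each of which now meets at most one of $D_1, D_2, D_3$.

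The main obstacle, and the delicate part of the plan, is preserving the three-clique part of the invariant across the recursion, because the child of a Move~A call naively inherits the three cliques $D_1\cap V(K), D_2\cap V(K), D_3\cap V(K)$ together with the fresh clique $C_X$, for four pieces in $W'$. I would resolve this by interleaving the two moves: whenever after a Move~A the child would carry four cliques, the child's first action is a Move~B on those inherited cliques, so each of its grandchildren meets at most one of $D_1, D_2, D_3$ and thus inherits at most three cliques in total (the surviving $D_i\cap V(K)$ together with $C_X$ and the new $C_Y$). Absorbing any intersections $D_i\cap V(K)$ of small size into the non-clique part gives the extra slack needed to keep $|N'|\leq 6k+4$, using the bounds $|N\cap V(K)|\leq 4k+2$ and $|N_X|\leq k$ guaranteed by Move~A.

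For the running time, each recursive call strictly reduces $|V(H)\setminus W|$ (at least one new vertex leaves the unprocessed side at every move), so there are $\OO(n)$ total calls; each call costs $\OO(2^{7k}(kn^3+n^{\omega+1}))$ time dominated by Corollary \ref{corollary:separation4k}, giving the claimed $\OO(2^{7k}(kn^4+n^{\omega+2}))$ overall. Whenever Corollary \ref{corollary:separation4k} or Lemma \ref{lemma:3cliqsep} reports that no separator of the required form exists, we immediately conclude that $G$ has no chordal vertex deletion set of size $k$, which provides the robust behaviour demanded by the theorem.
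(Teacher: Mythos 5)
Your overall strategy --- a recursive boundary-based construction that alternates a structured balanced separator (Corollary \ref{corollary:separation4k}) with a three-clique-splitting separator (Lemma \ref{lemma:3cliqsep}), with progress measured by $|V(H)\setminus W|$, $\OO(n)$ total calls, and robustness inherited from the failure modes of those two subroutines --- is the same as the paper's, and your running-time accounting matches. However, there is a genuine gap in how you interleave the two moves, and it breaks the bound of four cliques per bag. Your boundary invariant allows three cliques $D_1,D_2,D_3$; Move~A then adds the separator's clique $C_X$, so a child's boundary can contain four cliques. You acknowledge this and propose that the child's first action be a Move~B, but the bag emitted by that Move~B must contain the child's \emph{entire} boundary (four cliques, since the boundary has to sit in the root bag of the child's subtree for the decomposition to be valid) plus the clique $D$ of the $(1,k)$-semi clique separator returned by Lemma \ref{lemma:3cliqsep} --- five cliques in one bag. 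Your assertion that this bag is a $(4,|N|+k)$-semi clique only holds when the boundary entering Move~B has at most three cliques. As written, your construction yields a $(5,7k+5)$-semi clique tree decomposition: still sufficient for the downstream $2^{\OO(k)}n^{\OO(1)}$ dynamic programs, but not the theorem as stated.

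The fix is to reverse the order of the two moves: the balanced-separator step may only be applied to a boundary with at most \emph{two} cliques, and as soon as a boundary accumulates three cliques the \emph{only} permitted step is the three-clique split. The paper enforces this with a counter $d\in\{0,1,2\}$ on ${\sf Decompose}$ (boundary a $(d,6k+4)$-semi clique, emitted bag a $(d+1,7k+5)$-semi clique) together with a separate procedure ${\sf SplitCliques}$ that takes over exactly when a $(3,5k+3)$-semi clique boundary is produced. ${\sf SplitCliques}$ applies only Lemma \ref{lemma:3cliqsep}, emits a bag with $3+1=4$ cliques, and --- because each resulting component meets at most one of the three original cliques --- hands its children boundaries with only two cliques (the surviving $C_i$ plus the separator's clique $D$), thereby resetting the counter. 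With that reordering your argument goes through; the padding of the non-clique part and the arithmetic $4k+2+k+1=5k+3\le 6k+4\le 7k+5$ you invoke are consistent with what the paper does.
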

\begin{proof}

We assume that $G$ is connected as if not we can construct a $(4,7k+5)$-semi clique tree decomposition for each connected components of $G$ and attach all of them to a root node whose bag is empty to get the required $(4,7k+5)$-semi clique tree decomposition of $G$.

To construct a $(4,7k+5)$-semi clique tree decomposition $\mathcal{T}$, we define a recursive procedure ${\sf Decompose}(W,S,d)$ where $S \subset W \subseteq V(G)$ and $d \in \{0,1,2\}$. The procedure returns a rooted $(4, 7k+5)$-semi clique tree decomposition of $G[W]$ such that $S$ is contained in the root bag of the tree decomposition. The procedure works under the assumption that the following invariants are satisfied. 
\begin{itemize}
\item $G[S]$ is a $(d,6k+4)$-semi clique and $W \setminus S \neq \emptyset$.
\item $S = N_G(W \setminus S)$. Hence $S$ is called the \textit{boundary} of the graph $G[W]$.
\end{itemize}

To get the required $(4, 7k+5)$-semi clique tree decomposition of $G$, we call ${\sf Decompose}(V(G),$  $\emptyset, 0)$ which satisfies all the above invariants.
The procedure ${\sf Decompose}(W,S,d)$ calls procedures ${\sf Decompose}(W',S',d')$ and a new procedure ${\sf SplitCliques}(W',S')$ whenever $d=2$. 
For these subprocedures, we will show that $|W' \setminus S'| < |W \setminus S|$. Hence by induction on cardinality of $W \setminus S$, we will show the correctness of the ${\sf Decompose}$ procedure.

The procedure ${\sf SplitCliques}(W,S)$ with $S \subset W \subseteq V(G)$ 
 also outputs a rooted $(4, 7k+5)$-semi clique tree decomposition of $G[W]$ such that $S$ is contained in the root bag of the tree decomposition. But the invariants under which it works are slightly different which we list below. 
\begin{itemize}
\item $G[S]$ is a $(3,5k+3)$-semi clique and $ W \setminus S \neq \emptyset$.
\item $S = N_G( W \setminus S)$.
\end{itemize}

Notice that the only difference between invariants for ${\sf Decompose}$ and ${\sf SplitCliques}$ is the first invariant where we require $G[S]$ to be a $(3,5k+3)$-semi clique for ${\sf SplitCliques}$ and $(d,6k+4)$-semi clique for ${\sf Decompose}$.
%

The procedure ${\sf SplitCliques}(W,S)$ calls procedures ${\sf Decompose}(W',S',2)$ where we will again show that $|W' \setminus S'| < |W \setminus S|$. Hence again by induction on cardinality of $W \setminus S$, we will show the correctness. 
Now we describe how the procedure ${\sf Decompose}$ is implemented. 

\medskip
\noindent
\textbf{Implementation of ${\sf Decompose}(W,S,d)$:}
Notice that $d\in \{0,1,2\}$. 
Firstly, if $|W \setminus S| \leq k+1$, we output the tree decomposition as a node $r$ with bag $X_r=W$ and stop. Clearly the graph $G[X_r]$ is a $(4,7k+5)$-semi clique and it contains $S$. Otherwise, we do the following.

We construct a set $\hat{S}$ with the following properties.
\begin{enumerate}
\item $S \subset \hat{S} \subseteq W \subseteq V(G)$.
\item $G[\hat{S}]$ is a $(d+1, 7k+5)$-semi clique. Let $\hat{S} = C' \uplus N'$ where $G[C']$ is the union of $d+1$ cliques and $|N'| \leq 7k+5$.
\item Every connected component of $G[W \setminus \hat{S}]$ is adjacent to at most $5k+3$ vertices of $N'$.
\end{enumerate}

Since $G[S]$ is a $(d,6k+4)$-semi clique, we have that $S = C \uplus N$, where $G[C]$ is the union of $d$ cliques and $|N| \leq 6k+4$.  

\medskip
\noindent
{\bf Case 1: $|N| < 5k+3$.} We set $\hat{S}=S\cup \{u\}$, where $u$ is an arbitrary vertex in $W \setminus S$. Note that this is possible as $W \setminus S \neq \emptyset$. Clearly $\hat{S}$ follows all the properties above.

\medskip
\noindent
{\bf Case 2: $5k+3 \leq |N| \leq 6k+4$.} Note that $G[W]$ being a subgraph of $G$ also has a chordal vertex deletion set of size at most $k$ if $G$ has it. Applying Corollary \ref{corollary:separation4k} for the graph $G[W]$ and the subset $N$, we either conclude that $G$ has no CVD of size $k$ or get a partition $(N_A, N_B)$ of $N$, a subset $X\subseteq W$ and a partition $D\uplus Z$ of $X$, where 
$D$ is a clique in $G[W]$ and $\vert Z\vert \leq k$,   in time $\OO(2^{7k} \cdot (kn^{3} +n^{\omega+1}) )$ such that $|N_A|, |N_B| \leq 4k+2$ and $X$ is a vertex separator of $N_A$ and $N_B$ in the graph $G[W]$.

%
%
%

We define $\hat{S} = S \cup X \cup \{u\}$ where $u$ is an arbitrary vertex in $W \setminus S$. 
We need to verify that $\hat{S}$ satisfies the required properties. 

\begin{claim} The set $\hat{S}$ satisfies properties $(1), (2)$ and $(3)$.
\end{claim}
\begin{proof}

Since $u \in W \setminus S$, $S \subset \hat{S}$. Hence $\hat{S}$ satisfies property $(1)$.

We now show that $\hat{S}$ satisfies property $(2)$. 
Recall that $S = C \uplus N$ , where $G[C]$ is the union of $d$ cliques and $|N| \leq 6k+4$.  
We define  sets $C' = C \cup D$ and $N' = ((N \cup Z)  \setminus C') \cup \{u\}$
Notice that $\hat{S} = C' \cup N'$. 
Clearly $G[C']$ is the union of $d+1$ cliques. Also $|N'| \leq |N| + |Z| + 1 \leq (6k+4)+k + 1 \leq 7k+5$.
Thus $\hat{S}$ satisfies property (2).

We now show that $\hat{S}$ satisfies property $(3)$. 
Recall $\hat{S} = C' \cup N'$, where $C'=C\cup D$ and $N'=((N\cup Z)\setminus C') \cup \{u\}$. 
Recall that $X=D\cup Z \subseteq \hat{S}$ is separator of $N_A$ and $N_B$.  
where $N=N_A\uplus N_B$ and  $|N_A|, |N_B| \leq 4k+2$. This implies that any connected component $H$ in $G[W\setminus X]$ can contain at most $4k+2$ vertices from $N$ as the neighborhood of $V(H)$ is contained in $X$, because $X$ is a separator. Moreover $\vert Z\vert \leq k$. This implies that any connected component in $G[W\setminus \hat{S}]$ is adjacent to at most $4k+2$ vertices in $N$ and at most $k$ vertices in $Z$, and hence at most $5k+3$ vertices in $N'=((N\cup Z)\setminus C') \cup \{u\}$. 
\end{proof}
Now we define the recursive subproblems arising in the procedure ${\sf Decompose}$ $(W,S,d)$ using the constructed set $\hat{S}$. If $\hat{S}=W$, then there will not be any recursive subproblem. 
Otherwise, let $P_1, P_2, \dotsc, P_q$ be vertex sets of the connected components of $G[W \setminus \hat{S}]$ and 
$q\geq 1$ because  $\hat{S}\neq W$. 
We have the following cases:

\medskip
\noindent
{\bf Case 1: $d < 2$: }
For each $i \in [q]$, recursively call the procedure ${\sf Decompose}(W'= N_G[P_i], S'= N_G(P_i), d+1)$.

We now show that the  invariants are satisfied for procedures ${\sf Decompose}(W' = N_G[P_i],S' = N_G(P_i),d+1)$ for all $i \in [q]$.
Let $Q_i = S' \cap N'$. Note that from condition $(3)$ for $\hat{S}$, we have $|Q_i| \leq 5k+3 $. Since $S' \setminus Q_i \subseteq C'$ and $G[C']$ is a union of $d+1$ cliques, $G[S']$ forms a $(d+1,5k+3)$-semi clique which is also a $(d+1,6k+4)$-semi clique.  Also by definition of neighbourhoods, $P_i = N_G[P_i] \setminus N_G(P_i) = W' \setminus S'$. Since $P_i$ is a non-empty set by definition, $W' \setminus S'$ is non-empty. Hence the first invariant required for the ${\sf Decompose}$ is satisfied. 
%
%
Since $S' = N_G(P_i) = N_G(N_G[P_i] \setminus N_G(P_i)) = N_G(W' \setminus S')$, the second invariant is satisfied. 


\medskip
\noindent
{\bf Case 2: $d = 2$: }
For each $i \in [q]$, recursively call the procedure ${\sf SplitCliques}(W'= N_G[P_i], S'= N_G(P_i))$. We can show that the invariants for ${\sf SplitCliques}$ are satisfied with the proofs similar to previous case. 


We now explain how to construct the $(4,7k+5)$-semi clique tree decomposition using ${\sf Decompose}(W,S,d)$. 
Here, we assume that ${\sf Decompose}(W',S',d+1)$ and ${\sf SplitCliques}(W',S')$ return a  $(4,7k+5)$-semi clique tree decomposition $G[W']$ 
when $\vert W'\setminus S'\vert <\vert W\setminus S\vert$. 
That is, we apply induction on $|W \setminus S|$. 
Look at the subprocedures ${\sf Decompose}(W',S',d)$ and ${\sf SplitCliques}(W',S')$. We have $W' \setminus S'  = N_G[P_i] \setminus N_G(P_i) = P_i$ which is a subset of $W \setminus \hat{S}$ which in turn is a strict subset of $W \setminus S$. Hence $| W' \setminus S'| < |W \setminus S|$. Hence we apply induction on $|W \setminus S|$ to the subprocedures. Let $ \mathcal{T}_i$ be the  $(4,7k+5)$-semi clique tree decomposition obtained from the subprocedure with $W' = N_G[P_i] $ and $S' = N_G(P_i) $. Let $r_i$ be the root of $ \mathcal{T}_i$ whose associated bag is $X_{r_i}$. By induction hypothesis $S'\subseteq X_{r_i}$. We create a node $r$ with the corresponding bag $X_r = \hat{S}$. For each $i \in [q]$, we attach $ \mathcal{T}_i$ to $r$ by adding edge $(r, r_i)$. Let us call the tree decomposition obtained so with root $r$ as $\mathcal{T}$. We return $\mathcal{T}$ as the output of ${\sf Decompose}(W,S,d)$. By construction, it easily follows that $\mathcal{T}$ is a $(4,7k+5)$-semi clique tree decomposition of the graph $G[W]$ with the root bag containing $S$. We note that when $W=\hat{S}$, the procedure returns a single node tree decomposition with $X_r=W=\hat{S}$.

\medskip
\noindent
\textbf{Implementation of ${\sf SplitCliques}$ Procedure: }
Again if $|W \setminus S| \leq k+1$, we output the tree decomposition as a node $r$ with bag $X_r=W$ and stop. Clearly the graph $G[X_r]$ is a $(4,7k+5)$ semi clique and it contains $S$. Otherwise we do the following.
Let $S = C \uplus N = (C_x \uplus C_y \uplus C_z) \uplus N$ where $C_x, C_y$ and $C_z$ are the vertex sets of the three cliques in $G[C]$. We apply Lemma \ref{lemma:3cliqsep} to graph $G[W]$ and sets $C_x, C_y$ and $C_z$, to either conclude that $G$ has no CVD of size $k$ or obtain a set $Y$ such that $Y$ separates the sets $C_x, C_y$ and $C_z$ and $G[Y]$ is a $(1,k)$-semi clique. Let $Y = D \uplus X$ where $D$ is a clique and $|X| \leq k$. 

Let $Y' = Y \cup \{u\}$ where $u$ is any arbitrary vertex from $W \setminus S$ which we know to be non-empty.
If $S\cup Y'=W$, then it will not call any recursive subproblem. 
Otherwise, let $P_1, P_2, \dotsc, P_q$ be the connected components of the graph $G[W \setminus (S \cup Y')]$. We recursively call ${\sf Decompose}(W' = N_G[P_i], S' = N_G(P_i),2)$ for all $i \in [q]$. 

Since $Y'$ is a separator of the cliques $C_x, C_y$ and $C_z$, any connected component $P_i$ will have neighbours to at most one of the three cliques $C_x \setminus Y', C_y \setminus Y'$ and $C_z \setminus Y'$ in $G[W \setminus (S \cup Y')]$.
%
We show that the invariants required for the procedure ${\sf Decompose}$ is satisfied in these subproblems. Let us focus on the procedure ${\sf Decompose}(W' = N_G[P_i],S'= N_G(P_i),2)$ which has neighbours only to the set $C_x \setminus Y'$.
We define sets $C' = C_x \cup D$ and $N'= (N \cup X \cup \{u\}) \setminus C'$. The vertex set  $P_i$ has neighbours only to the set $ (C_x \uplus N) \cup Y' = (C_x \uplus N) \cup (D \uplus X) \cup \{u\} = (C_x \cup D) \cup (N \cup X \cup \{u\}) = C' \uplus N'$. Clearly $G[C']$ is the union of at most two cliques and $|N'| \leq |N| + |X| + 1 = 5k+3 +k + 1 \leq 6k+4$. 
Hence the first invariant is satisfied for the procedure ${\sf Decompose}(N_G[P_i], N_G(P_i),2)$. 
The proof of the second invariant is the same as to that of the subproblems of ${\sf Decompose}$ procedure.
The satisfiability of invariants for other subprocedures can also be proven similarly.

We now construct the $(4,7k+5)$-semi clique tree decomposition  returned by ${\sf SplitCliques}$ $(W,S)$. Again we apply induction on $|W \setminus S|$. Consider the subprocedures ${\sf Decompose}(W',S',d)$. We have $W' \setminus S'  = N_G[P_i] \setminus N_G(P_i) = P_i$ which is a subset of $W \setminus (S \cup Y')$ which in turn is a strict subset of $W \setminus S$ as $u \in W \setminus S$ is present in $Y'$. Hence $| W' \setminus S'| < |W \setminus S|$ and we apply induction on $|W \setminus S|$ to the subprocedures. Let $ \mathcal{T}_i$ be the  $(4,7k+5)$-semi clique tree decomposition obtained from the subprocedure with $W' = N_G[P_i] $ and $S' = N_G(P_i) $. Let $r_i$ be the root of $ \mathcal{T}_i$ whose bag $X_{r_i}$ we show contains $S'$. We create a node $r$ with the corresponding bag $X_r = S \cup Y' = (C_x \uplus
C_y \uplus C_z \uplus D) \uplus N'$.
For each $i \in [q]$, we attach $ \mathcal{T}_i$ to $r$ by adding edge $(r, r_i)$. Let us call the tree decomposition obtained so with root $r$ as $\mathcal{T}$. We return $\mathcal{T}$ as the output of ${\sf SplitCliques}(W,S,d)$. By construction, it easily follows that $\mathcal{T}$ is a $(4,7k+5)$-semi clique tree decomposition of the graph $G[W]$ with the root bag containing $S$. 
We mention that when $W=S\cup Y'$, the procedure returns a single node tree decomposition with $X_r=W$.

\medskip
\noindent
\textbf{Running time analysis: }
In the procedure ${\sf Decompose}$, we invoke Corollary \ref{corollary:separation4k} which takes $\OO(2^{7k} \cdot (kn^{3} +n^{\omega+1}))$ time.
 For the procedure ${\sf SplitCliques}$, we invoke Lemma \ref{lemma:3cliqsep} which takes $\OO(4^{k} \cdot (kn^{3} +n^{\omega+1}))$ time.
All that is left is to bound the number of calls of the procedures ${\sf Decompose}$ and ${\sf SplitCliques}$. 
Each time ${\sf Decompose}$ or ${\sf SplitCliques}$ is called, it creates a set $\hat{S}$ (in the case of ${\sf SplitCliques}$, $\hat{S}=S\cup Y'$) which is a strict superset of $S$. This allows us to  map each call of ${\sf Decompose}$ or ${\sf SplitCliques}$ to a unique vertex $u \in \hat{S} \setminus S$ of $V(G)$. Hence the total number of calls of ${\sf Decompose}$ and ${\sf SplitCliques}$ is not more than the total number of vertices $n$. 
 Hence the overall running time of the algorithm which constructs  the $(4,7k+5)$-semi clique tree decomposition of $G$ is $\OO(2^{7k} \cdot (kn^{4} +n^{\omega+2}))$.
\end{proof}





\section{Structural Parameterizations with Chordal Vertex Deletion Set} \label{sec:Structural Param}


Now, we briefly explain a DP algorithm using semi clique tree decomposition to prove the following theorem. 

\begin{theorem} \label{theorem:fpt}
There is a $\OO(2^{7k} n^5)$ time algorithm for \VCCVD\ that either returns minimum vertex cover of $G$ or concludes that there is no CVD of size $k$ in $G$.
\end{theorem}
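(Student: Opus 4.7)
The plan is to first invoke Theorem~\ref{theorem:treedecomp} on $(G,k)$. Either the algorithm concludes that $G$ has no chordal vertex deletion set of size $k$ (in which case we output this and stop), or it returns, in time $\OO(2^{7k}(kn^4+n^{\omega+2}))$, a $(4,7k+5)$-semi clique tree decomposition $\mathcal{T}=(T,\{X_t\}_{t\in V(T)})$ of $G$ together with, for every $t$, a partition $X_t = C_1^t \uplus C_2^t \uplus C_3^t \uplus C_4^t \uplus N^t$ in which each $C_i^t$ is a clique of $G$ and $|N^t|\le 7k+5$. We then compute a minimum vertex cover of $G$ by dynamic programming on $\mathcal{T}$, which we may assume (by the standard transformation) is a nice tree decomposition whose every bag still has the above clique-plus-$N$ partition.

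The key structural observation that drives the DP is that in any vertex cover $C$, for each clique $C_i^t$ at most one vertex of $C_i^t$ can lie outside $C$. Consequently, the possible traces of a vertex cover on a single bag $X_t$ are parameterized by: (i) a choice, for each $i\in\{1,2,3,4\}$, of either ``all of $C_i^t$ is in the cover'' or a specific vertex $v\in C_i^t$ that is omitted, and (ii) an arbitrary subset $A\subseteq N^t$ giving the cover's intersection with $N^t$. The number of such traces is at most $\prod_{i=1}^4(|C_i^t|+1)\cdot 2^{|N^t|} = \OO(2^{7k}\cdot n^4)$. For each bag $t$ and each trace $\sigma$, the table entry $\mathrm{VC}[t,\sigma]$ stores the minimum number of vertices, among those already ``seen'' in the subtree rooted at $t$, that are picked into a partial vertex cover whose intersection with $X_t$ agrees with $\sigma$, subject to the constraint that every edge of $G$ both of whose endpoints have already been seen is covered.

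Next I would describe the transitions. At an introduce node adding vertex $v$, the state $\sigma$ determines whether $v$ is in the cover, and we simply verify that every edge from $v$ to another bag vertex is covered by $\sigma$; if not, the entry is $+\infty$. At a forget node removing vertex $v$, we take the minimum, over the two extensions of $\sigma$ (with $v$ inside or outside the cover), of the child's table, again rejecting any extension where $v$ lies outside the cover while one of its bag neighbours also does. At a join node we match the two children's traces: the join is consistent iff they agree on $X_t$ (note that a vertex omitted from the cover of $C_i^t$ in one child must be the same as in the other), in which case we sum and subtract $|\sigma|$ to avoid double counting. Since each table has $\OO(2^{7k}n^4)$ entries and each transition is done in polynomial time per entry, the DP runs in $\OO(2^{7k}n^5)$ total time over the at most $\OO(n)$ nodes of the nice tree decomposition. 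The final answer is read off at the root. Adding this to the time of Theorem~\ref{theorem:treedecomp} gives the claimed $\OO(2^{7k}n^5)$ bound (using $\omega<3$).

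The main subtlety, rather than a deep obstacle, is correctly handling the bookkeeping at join nodes: because a vertex may appear in different clique-parts $C_i^t$ in different bags along $T$ (the partition is local to each bag), we must verify that the ``at most one vertex per clique outside the cover'' trace format is consistent under the agreement condition at joins, which holds because the trace is ultimately determined by the set $X_t\cap C$. Once this is observed, correctness follows by standard induction on $T$, and the running time bound follows from the state count derived from the $(4,7k+5)$-semi clique structure.
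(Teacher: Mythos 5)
Your proposal is correct and follows essentially the same route as the paper: construct the $(4,7k+5)$-semi clique tree decomposition via Theorem~\ref{theorem:treedecomp}, observe that a vertex cover omits at most one vertex from each clique part so that only $\OO(2^{7k}n^4)$ traces per bag need be stored, and run the standard vertex-cover DP for a total of $\OO(2^{7k}n^5)$. The paper leaves the transitions as ``standard dynamic programming''; your explicit introduce/forget/join description and the remark that the state is determined by $X_t\cap C$ (so the per-bag clique partition being local causes no trouble) are consistent elaborations of the same argument.
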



\begin{proof}[Proof sketch]
First, we use Theorem \ref{theorem:treedecomp} to construct a $(4,7k+5)$-semi clique tree decomposition $\mathcal{T}=(T,\{X_t\}_{t\in V(T)})$ of $G$ in $\OO^*(2^{7k})$ time. 
In the tree decomposition  $\mathcal{T}=(T,\{X_t\}_{t\in V(T)})$, for any vertex $t \in V(T)$, we call $D_t$ to be the set of vertices that are descendant of $t$. We define $G_t$ to be the subgraph of $G$ on the vertex set $X_t \cup \bigcup_{t' \in D_t} X_{t'}$. 
We briefly explain the DP table entries on $\mathcal{T}$. 
Arbitrarily root the tree  $T$ at a node $r$.   
Let $X_t=C_{t,1}\uplus\ldots \uplus C_{t,4} \uplus N_t$ where $|N_t| \leq 7k+5$ and $C_{t,j}$ is a clique in $G$ for all $j\in \{1,\ldots,4\}$. In a standard DP for each node $t\in V(T)$ and $Y\subseteq X_t$, we have a table entry $DP[Y,t]$ which stores the value of a minimum vertex cover $S$ of $G_t$ such that $Y=X_t\cap S$ and if no such vertex cover exists, then  $DP[Y,t]$ stores $\infty$. In fact we only need to store $DP[Y,t]$ whenever it is not equal to $\infty$. 
Now consider a bag $X_t$ in ${\cal T}$. For any $Y\subseteq X_t$, if $\vert C_{t,j} \setminus Y \vert \geq 2$ for any $j\in [4]$, then 
 $DP[Y,t]=\infty$ because $C_{t,j}$ is a clique. Therefore, we only need to consider subsets $Y\subseteq X_t$ for which $\vert C_{t_j}\setminus Y\vert \leq 1$ for all $j\in [4]$. The number of choices of such subsets $Y$ is bounded 
 by $\OO(2^{7k}n^4)$. This implies that the total number of DP table entries is $\OO(2^{7k}n^5)$. All these values can be computed in time $\OO(2^{7k}n^5)$ time using standard dynamic programming in a bottom up fashion. For more details about dynamic programming over tree decomposition, see ~\cite{cygan2015parameterized}.
\end{proof}

In a similar way, using the fact that any odd cycle transversal or feedback vertex set contains all but at most two vertices from each clique, we can give FPT algorithms for  following theorems. 

\begin{theorem} \label{theorem:fptfvs}
There is an algorithm for {\sc FVS By CVD} running in time $2^{\OO(k)} n^{\OO(1)}$ that either returns minimum feedback vertex set of $G$ or concludes that there is no CVD of size $k$ in $G$.
\end{theorem}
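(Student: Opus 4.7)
The plan is to follow the template of Theorem~\ref{theorem:fpt}: first invoke Theorem~\ref{theorem:treedecomp} to either conclude that $G$ has no chordal vertex deletion set of size $k$ or obtain a $(4,7k+5)$-semi clique tree decomposition $\mathcal{T}=(T,\{X_t\}_{t\in V(T)})$ in time $2^{O(k)}n^{O(1)}$, with each bag partitioned as $X_t=C_{t,1}\uplus C_{t,2}\uplus C_{t,3}\uplus C_{t,4}\uplus N_t$ where $|N_t|\leq 7k+5$. Then run bottom-up dynamic programming on $\mathcal{T}$ computing, for each node $t$, the minimum number of FVS vertices needed in the subgraph $G_t$. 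The crucial structural observation, analogous to the vertex-cover argument, is that if $S$ is a feedback vertex set of $G$ then for every clique $C$ of $G$ we must have $|C\setminus S|\leq 2$, since any three pairwise adjacent vertices form a triangle. In particular, for each bag, at most two vertices from each $C_{t,i}$ can lie outside $S$, so the number of candidate ``outside'' sets $Y=X_t\setminus S$ is at most $\prod_{i=1}^{4}\binom{|C_{t,i}|}{\leq 2}\cdot 2^{|N_t|}=O(n^{8}\cdot 2^{7k+5})$ per bag.

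Each DP state at node $t$ is indexed by such a subset $Y\subseteq X_t$ together with a partition $\pi$ of $Y$ recording which vertices presently lie in the same connected component of the partial forest $G_t[V(G_t)\setminus S]$; the table stores the minimum $|S\cap V(G_t)|$ over FVS-consistent extensions, and the standard introduce/forget/join transitions propagate $\pi$ while rejecting any edge whose two endpoints in $Y$ already lie in the same class (which would close a cycle). Because $|Y|\leq 2\cdot 4+(7k+5)=7k+13=\Theta(k)$, a naive enumeration of $\pi$ would cost $|Y|^{|Y|}=2^{\Theta(k\log k)}$; eliminating this $\log k$ factor is the main technical obstacle.

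To reduce the cost to $2^{O(k)}$ I would plug in the rank-based representative-set framework of Bodlaender, Cygan, Kratsch and Nederlof for connectivity problems on tree decompositions. At every node the family of partitions of $Y$ is replaced by a representative subfamily of size at most $2^{|Y|}=2^{O(k)}$ that is equivalent with respect to any future cycle-completion, and each standard transition can be performed on this representative family in $2^{O(|Y|)}\cdot |Y|^{O(1)}=2^{O(k)}$ time; the introduce step over a clique $C_{t,i}$ is done in bulk by iterating over the at most two vertices of $C_{t,i}$ kept outside $S$. Combining the $O(n^{8}\cdot 2^{7k+5})$ choices of $Y$ per bag, the $2^{O(k)}$ cost per choice, and the $O(n)$ bags of $\mathcal{T}$, the dynamic programming runs in $2^{O(k)}n^{O(1)}$ time; adding the $2^{O(k)}n^{O(1)}$ cost of Theorem~\ref{theorem:treedecomp} gives the claimed bound. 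Robustness is inherited from Theorem~\ref{theorem:treedecomp}, since failure to construct the semi clique tree decomposition already certifies that no CVD of size $k$ exists.
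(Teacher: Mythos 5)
Your overall route is the paper's: build the $(4,7k+5)$-semi clique tree decomposition via Theorem~\ref{theorem:treedecomp}, observe that a feedback vertex set misses at most two vertices of each clique so that only $\OO(2^{7k}n^{8})$ subsets $Y=X_t\setminus S$ per bag are relevant, and tame the partition coordinate with the rank-based framework of Bodlaender, Cygan, Kratsch and Nederlof. The one step that does not go through as you state it is the interaction between acyclicity and the representative-set reduction. The reduce operation of that framework is defined with respect to the \emph{connectivity} predicate: it keeps a subfamily that is equivalent for the question ``does some stored partition join with a given future partition into the single-block partition?''. It is not known to preserve equivalence with respect to ``does some stored partition admit a cycle-free completion?'', so ``reject cycle-closing edges locally, then reduce with respect to future cycle-completion'' is not an operation the framework supplies; the reduction could discard exactly the partition whose completion stays acyclic. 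This is why the paper, following the treatment of \textsc{Feedback Vertex Set} in that framework, first adds a universal vertex $v_0$ to every bag and augments the DP index with the number of vertices $i$ and edges $j$ of the partial solution: acyclicity is then certified globally by the arithmetic condition $j=i-1$ together with connectivity through $v_0$ (no component of the partial solution avoiding the bag), and connectivity is precisely what representative sets preserve. With that reformulation, the rest of your accounting --- $\OO(2^{7k}n^{8})$ choices of $Y$ per bag, $2^{\OO(k)}$ representative partitions per choice since $|Y|\leq 7k+13$, polynomially many bags --- matches the paper's stated bound of $\OO\bigl((1+2^{\omega+1})^{7k+13}\cdot(7k+13)^{\OO(1)}\cdot n^{13}\bigr)$.
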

\begin{proof}[Proof sketch]
We use the ideas from the DP algorithm for {\sc Feedback Vertex Set} using the rank based approach~\cite{bodlaender2015deterministic}.
We create an auxiliary graph $G'$ by adding a vertex $v_0$ to $G$ and making it adjacent to all the vertices of $G$. Let $E_0$ be the set of newly added edges.
We again use Theorem \ref{theorem:treedecomp} to construct a $(4,7k+5)$-semi clique tree decomposition of $G$ and add $v_0$ to all the bags to get the tree decomposition $\mathcal{T}=(T,\{X_t\}_{t\in V(T)})$ of $G'$ in $\OO^*(2^{7k})$ time. We use a dynamic programming algorithm for {\sc Feedback Vertex Set} on $\mathcal{T}$ where the number of entries of the DP table we will show to be $2^{7k+5}n^{11}$. Let $X_t=C_{t,1}\uplus\ldots \uplus C_{t,4} \uplus N_t$ for all $t \in V(T)$ where $|N_t| \leq 7k+5$ and $C_{t,j}$ is a clique in $G$ for all $j\in \{1,\ldots,4\}$. For a node $t \in V(T)$, a subset $Y \subseteq X_t$ and integers $i,j \in [n]$, we define the entry $DP[t,Y,i,j]$. The entry $DP[t,Y,i,j]$ stores a partition $\mathcal{P}$ of $Y$ if
\begin{itemize}
\item there exists a vertex subset $X \subseteq D_t$ , $v_o \in X$ such that $X \cap X_t = Y$ and
\item there exists an edge subset $X_0 \subseteq E(G_t) \cap E_0$ such that in the graph $(X, E(G_t[X \setminus \{v_0\}]) \cup X_0)$, we have $i$ vertices, $j$ edges, no connected component is fully contained in $D_t \setminus X_t$ and the elements of $Y$ are connected according to the partition $\mathcal{P}$.
\end{itemize}
We set $DP[t,Y,i,j]=\infty$ if the entry can be inferred to be invalid from $Y$.

We claim that {\sc Feedback Vertex Set by CVD} is a yes instance if and only if for the root $r$ of $\mathcal{T}$ with $X_r = \{v_0\}$ and some $i \geq |V| - \ell$, we have $DP[r,\{v_0\},i,i-1]$ to be non-empty. In the forward direction, we have a feedback vertex set $Z$ of size $\ell$. The graph $G - Z$ has $|V| - \ell$ vertices and $|V| - \ell - c$ edges where $c$ is the number of connected components of $G - Z$. We define $X = V \setminus Z \cup \{v_0\}$ and $X_0$ to be $c$ edges connecting $v_0$ to any one of the vertices of each of the $c$ components of $V \setminus Z$. We have $|X| \geq |V| - \ell$. The graph $(X, E(G_t[X \setminus \{v_0\}]) \cup X_0)$ has $|V| - \ell$ edges and satisfies the properties required for an entry in $DP[r,\{v_0\},i,i-1]$. In the reverse direction, we have a graph $(X, E(G_t[X \setminus \{v_0\}]) \cup X_0)$ having $i$ edges and $i-1$ edges. Since no connected component of the graph can be contained in $V(G_t) \setminus \{v_0\}$, the graph is a tree. Hence $V \setminus X$ is a feedback vertex set.

Now consider a bag $X_t$ in ${\cal T}$. For any $Y\subseteq X_t$, if $\vert C_{t,j} \setminus Y \vert \geq 3$ for any $j\in [4]$, then  $DP[t,Y,i,j]=\infty$ because $C_{t,j}$ is a clique. Therefore, we only need to consider subsets $Y\subseteq X_t$ for which $\vert C_{t_j}\setminus Y\vert \leq 2$ for all $j\in [4]$. The number of choices of such subsets $Y$ is bounded by $\OO(2^{7k}n^8)$. This implies that the total number of DP table entries is $\OO(2^{7k}n^{11})$. In each DP table entry $DP[t,Y,i,j]$, we store partitions of $Y$. The cardinality of $Y$ is bounded by $7k+13$ as $\vert C_{t_j}\setminus Y\vert \leq 2$ for all $j\in [4]$. Hence the number of entries stored in $DP[t,Y,i,j]$ can be bounded to be $2^{7k+13}$. 

The recurrences for computing $DP[t,Y,i,j]$ remains the same as in~\cite{bodlaender2015deterministic}. 
 Using the ideas from \cite{bodlaender2015deterministic}, the time taken to compute all the  table entries of a particular node $t$ can be shown to be $ \OO((1+ 2^{\omega+1})^{7k+13} \cdot (7k+13)^{\OO(1)} \cdot n^{11})$. Taking the number of nodes to be $m=\OO(n^2)$ in the worst case, the overall running time is $ \OO((1+ 2^{\omega+1})^{7k+13} \cdot (7k+13)^{\OO(1)} \cdot n^{13})$.
 \end{proof}
\begin{theorem} \label{theorem:fptoct}
There is an algorithm for {\sc OCT By CVD} running in time $2^{\OO(k)} n^{\OO(1)}$ that either returns minimum odd cycle transversal of $G$ or concludes that there is no CVD of size $k$ in $G$.
\end{theorem}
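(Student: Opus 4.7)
The plan is to mimic the dynamic-programming strategies used in Theorems~\ref{theorem:fpt} and \ref{theorem:fptfvs}, exploiting the structural fact that a bipartite graph contains no triangle: from every clique of $G$, at most two vertices can survive deletion of a valid odd cycle transversal, and these two vertices (if both survive) must lie in opposite parts of the bipartition of $G-Z$.

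First I would invoke Theorem~\ref{theorem:treedecomp} to either conclude that no CVD of size $k$ exists or obtain, in $\OO(2^{7k} \cdot n^{\OO(1)})$ time, a $(4,7k+5)$-semi clique tree decomposition $\mathcal{T} = (T, \{X_t\}_{t \in V(T)})$ together with bag partitions $X_t = C_{t,1} \uplus C_{t,2} \uplus C_{t,3} \uplus C_{t,4} \uplus N_t$ with $|N_t| \leq 7k+5$ and each $C_{t,j}$ a clique. Root $T$ arbitrarily and, for each node $t$, let $G_t$ denote the subgraph of $G$ induced on $X_t$ together with the bags of all descendants of $t$.

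The DP table is indexed by triples $(t, Y, c)$, where $Y \subseteq X_t$ is the guessed intersection of the solution with $X_t$, and $c : X_t \setminus Y \to \{0,1\}$ is a proper $2$-coloring of $G[X_t \setminus Y]$. The entry $DP[t, Y, c]$ stores the minimum cardinality of a set $Z \subseteq V(G_t)$ with $Z \cap X_t = Y$ such that $G_t - Z$ is bipartite and admits a proper $2$-coloring whose restriction to $X_t \setminus Y$ equals $c$; entries that do not correspond to a proper coloring of $G[X_t\setminus Y]$ are set to $\infty$. The recurrences are the standard introduce/forget/join relations for OCT over tree decompositions (after, if desired, converting $\mathcal{T}$ to a nice tree decomposition of the same structural type by routine manipulations), and the answer is read off at the root.

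The key step is to bound the number of valid states per bag. For each clique $C_{t,j}$, at most two vertices can lie in $X_t \setminus Y$, since any three vertices in a clique form a triangle which cannot be $2$-colored; moreover, if exactly two vertices of $C_{t,j}$ lie outside $Y$, their colors under $c$ are forced to be distinct. Hence the number of choices for the pair (subset of $C_{t,j}$ outside $Y$, coloring on that subset) is at most $1 + 2|C_{t,j}| + 2\binom{|C_{t,j}|}{2} = \OO(n^2)$. For $N_t$, each of the at most $7k+5$ vertices has three options (in $Y$, colored $0$, or colored $1$), giving $3^{7k+5} = 2^{\OO(k)}$ possibilities. Multiplying across the four cliques and $N_t$ yields $2^{\OO(k)} n^{\OO(1)}$ valid states per bag, and the total number of DP entries is $2^{\OO(k)} n^{\OO(1)}$.

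Each recurrence evaluation is polynomial (for introduce and forget nodes) or can be computed by iterating over compatible child states (for join nodes, similar to standard OCT DP). The main obstacle I anticipate is bookkeeping of consistency between the colorings on the four cliques and on $N_t$ inside a single bag, since edges may run between different $C_{t,j}$'s; however, this is only a per-state validity check and does not blow up the enumeration. Combining the tree decomposition construction cost from Theorem~\ref{theorem:treedecomp} with the DP yields the claimed $2^{\OO(k)} n^{\OO(1)}$ running time, and the algorithm is robust in the same sense as in Theorems~\ref{theorem:fpt} and \ref{theorem:fptfvs}: either the tree decomposition is produced and an optimal OCT is output, or the algorithm reports that no CVD of size $k$ exists.
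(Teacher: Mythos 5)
Your proposal is correct and follows essentially the same route as the paper: construct the $(4,7k+5)$-semi clique tree decomposition via Theorem~\ref{theorem:treedecomp}, index DP states by the solution's intersection with the bag together with a bipartition (coloring) of the remaining bag vertices, and bound the state count using the fact that at most two vertices of each clique $C_{t,j}$ can survive outside the transversal. Your additional observation that two surviving clique vertices must receive opposite colors is a slight refinement of the same counting argument and does not change the approach.
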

\begin{proof}[Proof sketch]
Let $\mathcal{T}=(T,\{X_t\}_{t\in V(T)})$ be a $(4,7k+5)$-semi clique tree decomposition of the input graph $G$.  For each node $t\in V(T)$ and sets $Y_1, Y_2 \subseteq X_t$, we have a table entry $DP[Y_1, Y_2, Y_3=X_t \setminus (Y_1 \cup Y_2),t]$ which stores the value of a minimum odd cycle transversal $S$ of $G_t$ such that $Y_3=X_t\cap S$ and $(Y_1, Y_2)$ is a bipartition of $X_t \setminus Y_3$ which extends to a bipartition of $G_t \setminus S$.

For any $t\in V(T)$, let $X_t=C_{t,1}\uplus\ldots \uplus C_{t,4} \uplus N_t$ where $|N_t| \leq 7k+5$ and $C_{t,j}$ is a clique in $G$ for all $j\in \{1,\ldots,4\}$. Then, any odd cycle transversal contains all but at most two vertices from each clique $C_{1,j}$, $i\in [4]$. Using this fact we can bound the number of DP table entries to be at most $2^{\OO(k)}n^{\OO(1)}$. Then, we have the theorem from the standard dynamic programming for odd cycle transversal on tree decompositions. 
\end{proof}

\subsection{SETH Lower Bounds}


We give a $\OO^*((2- \epsilon)^{k})$ lower bounds for \VCCVD\ , 
 {\sc FVS by CVD} and {\sc OCT by CVD}  assuming the Strong Exponential Time Hypothesis(SETH). 

\begin{theorem} \label{thm:seth-vc}
\VCCVD\ cannot be solved in $\OO^*((2- \epsilon)^{k})$ time for any $\epsilon >0$ assuming SETH.
\end{theorem}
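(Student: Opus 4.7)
The plan is to prove the lower bound by a polynomial-time reduction from CNF-SAT to \VCCVD. Given a formula $\phi$ with $n$ variables and $m$ clauses, the reduction should produce a graph $G$, an integer $\ell$, and an explicit chordal vertex deletion set $S \subseteq V(G)$ of size $k = n + o(n)$ such that $G$ has a vertex cover of size at most $\ell$ iff $\phi$ is satisfiable. Combining this with a hypothetical $\OO^*((2-\epsilon)^{k})$ algorithm for \VCCVD\ would yield an $\OO^*((2-\epsilon')^{n})$ algorithm for CNF-SAT for some $\epsilon' > 0$, contradicting SETH; since $(2-\epsilon)^{k} = (2-\epsilon)^{n + o(n)} = (2-\epsilon)^{n} \cdot 2^{o(n)}$, the additive $o(n)$ slack in the modulator is harmless.

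I would begin from the classical CNF-SAT to \textsc{Vertex Cover} reduction in which each variable $x_i$ becomes an edge $\{v_i^T, v_i^F\}$ (exactly one endpoint in the cover, encoding the assignment), each clause $C_j$ of width $r_j$ becomes a clique $K_j$ on $r_j$ vertices (from which $r_j - 1$ must enter the cover), and a cross-edge joins each vertex $u_{j,s} \in K_j$ to the variable vertex representing the literal $\ell_{j,s}$. Satisfying assignments correspond to vertex covers of size $\ell = n + \sum_j (r_j - 1)$. The difficulty for our setting is that this graph does not have a small chordal vertex deletion set: when two clauses share a variable, long induced cycles arise through the shared variable vertex and the clause cliques, and many such holes are not destroyed by removing a small set of vertices.

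The structural idea is to localise the ``False'' side of every variable. I would keep only a single modulator vertex $v_i$ per variable and, for each literal occurrence $(j,s)$ on $x_i$, introduce a private per-clause chordal gadget attached to $v_i$ and to the clique vertex $u_{j,s}$. For a positive literal this gadget degenerates to the direct edge $\{u_{j,s}, v_i\}$; for a negative literal the gadget contains a short chordal ``negation pendant'' $a_{j,s}$ with the property that $u_{j,s} \notin $ VC forces $a_{j,s} \in$ VC, and (for the minimum cover) $a_{j,s} \in$ VC is cheap exactly when $v_i \notin$ VC. Because every such gadget vertex has neighbours outside the modulator only within its own clause component, $G - \{v_1,\ldots,v_n\}$ becomes a disjoint union of chordal components (each a clique $K_j$ with pendant-like attachments), so $S = \{v_1,\ldots,v_n\}$ is a CVD of size $n$, and the reduction outputs this $S$ explicitly together with $G$ as permitted by the theorem.

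The main obstacle is to balance the gadgets so that the target $\ell$ depends only on $\phi$ and not on the chosen satisfying assignment. The negation gadgets are intrinsically asymmetric: for a variable $x_i$ with $q_i$ negative occurrences, the natural contribution of its gadgets to any minimum cover differs between the assignments $x_i = $ True and $x_i = $ False. I would cancel this asymmetry by attaching padding cliques (or similar padding structures) to each $v_i$ whose sizes are tuned to $p_i, q_i$, so that the per-variable contribution to the cover is the same regardless of its truth value; pushing a small auxiliary set into the modulator is acceptable since it inflates $k$ by only an $o(n)$ additive term and does not affect the SETH bound. Once the balancing is verified by a direct counting argument, the forward direction of correctness constructs a cover of size exactly $\ell$ from any satisfying assignment and its chosen satisfier literals, while the reverse direction observes that any cover of size $\leq \ell$ must be tight inside every clique $K_j$, and the unique non-cover vertex $u_{j,s}$ per clause certifies through its external neighbour in the cover that $\ell_{j,s}$ is true under the assignment encoded by $S$. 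Analogous modifications of the padding and the gadgets will then yield the SETH lower bounds for {\sc FVS by CVD} and {\sc OCT by CVD} as well.
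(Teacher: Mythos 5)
Your overall frame is sound and matches the paper's strategy at a high level: reduce from a problem that is SETH-hard with respect to a parameter $n$, keep the modulator down to one vertex per variable/element so that $(2-\epsilon)^{k}$ translates into $(2-\epsilon)^{n}$, turn each clause/set into a clique contributing $|S|-1$ to the budget, and read off the assignment from the unique uncovered vertex of each clique. The paper, however, does not start from CNF-SAT: it reduces from \textsc{Hitting Set} parameterized by universe size (SETH-hard by Cygan et al.), which is in effect a monotone instance. There, one vertex $v_u$ per element, one clique per set, and direct edges between each element vertex and its copies suffice; $G-\{v_u : u\in U\}$ is a disjoint union of cliques, the CVD has size exactly $n$, and the budget $k+\sum_{S}(|S|-1)$ is verified by a short counting argument. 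All of the machinery you introduce --- negation pendants, per-occurrence gadgets, padding cliques --- exists only to cope with negative literals, and the clean move is to pick a source problem that has none.

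This matters because the part of your proof you defer is precisely the part that does not work as described. Consider a variable $x_i$ with $q_i\geq 2$ negative occurrences under your pendant gadget $u_{j,s}-a_{j,s}-v_i$. If $x_i$ is set to True ($v_i$ in the cover), the variable and all its pendants contribute $1$ to the cover; if $x_i$ is False, every edge $\{a_{j,s},v_i\}$ forces $a_{j,s}$ into the cover, contributing $q_i$. You propose to cancel this by padding attached to $v_i$, but for any gadget $P$ whose only connection to the rest of the graph is through $v_i$, the minimum cost of covering $P$'s edges conditioned on $v_i$ being in the cover is at most its cost conditioned on $v_i$ being out (adding a vertex to a partial cover never increases the residual cost). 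So such padding can only make the True branch cheaper relative to the False branch, whereas you need it to be $q_i-1$ units more expensive; the imbalance cannot be repaired this way. The standard alternative --- a single complement vertex $\bar v_i$ adjacent to $v_i$ and to all negative occurrences --- restores the balance but destroys the modulator: $\bar v_i$ then has neighbours in several clause cliques, $G-\{v_1,\dots,v_n\}$ acquires long induced cycles, and you are pushed back to a CVD of size $2n$, which yields no SETH bound. Until you exhibit a negation gadget that is simultaneously balanced and confined to a single modulator vertex per variable (or, as the paper does, switch to a monotone source problem), the reduction is not established.
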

\begin{proof}
We give a reduction from {\sc Hitting Set} defined as follows.

\noindent{\sc Hitting Set} : In any instance of {\sc Hitting Set}, we are given a set of elements $U$ with $|U|=n$, a family of subsets $\mathcal{F}=\{F \subseteq U\}$ and a natural number $k$. The objective is to find a set $S \subseteq U$, $|S| \leq k$ such that $S \cap F \neq \emptyset$ for all $F \in \mathcal{F}$.

The problem cannot be solved in $\OO^*((2- \epsilon)^{n})$ time assuming SETH ~\cite{DBLP:journals/talg/CyganDLMNOPSW16}.

Consider a {\sc Hitting Set} instance $(U,\FF)$. We construct an instance of {\sc Vertex Cover by ClsVD} as follows.
For each element $u \in U$, we add a vertex $v_u$. For each set $S \in \FF$, we add $|S|$ vertices corresponding to the elements in $S$. We also make the vertices of $S$ into a clique. Finally, for each element $u \in U$, we add edges from $v_u$ to the vertex corresponding to $u$ for each set in $S$ that contains $u$. See Figure \label{figpoint}.

\begin{figure}[!ht]
\begin{center}
\includegraphics[width=0.8\textwidth]{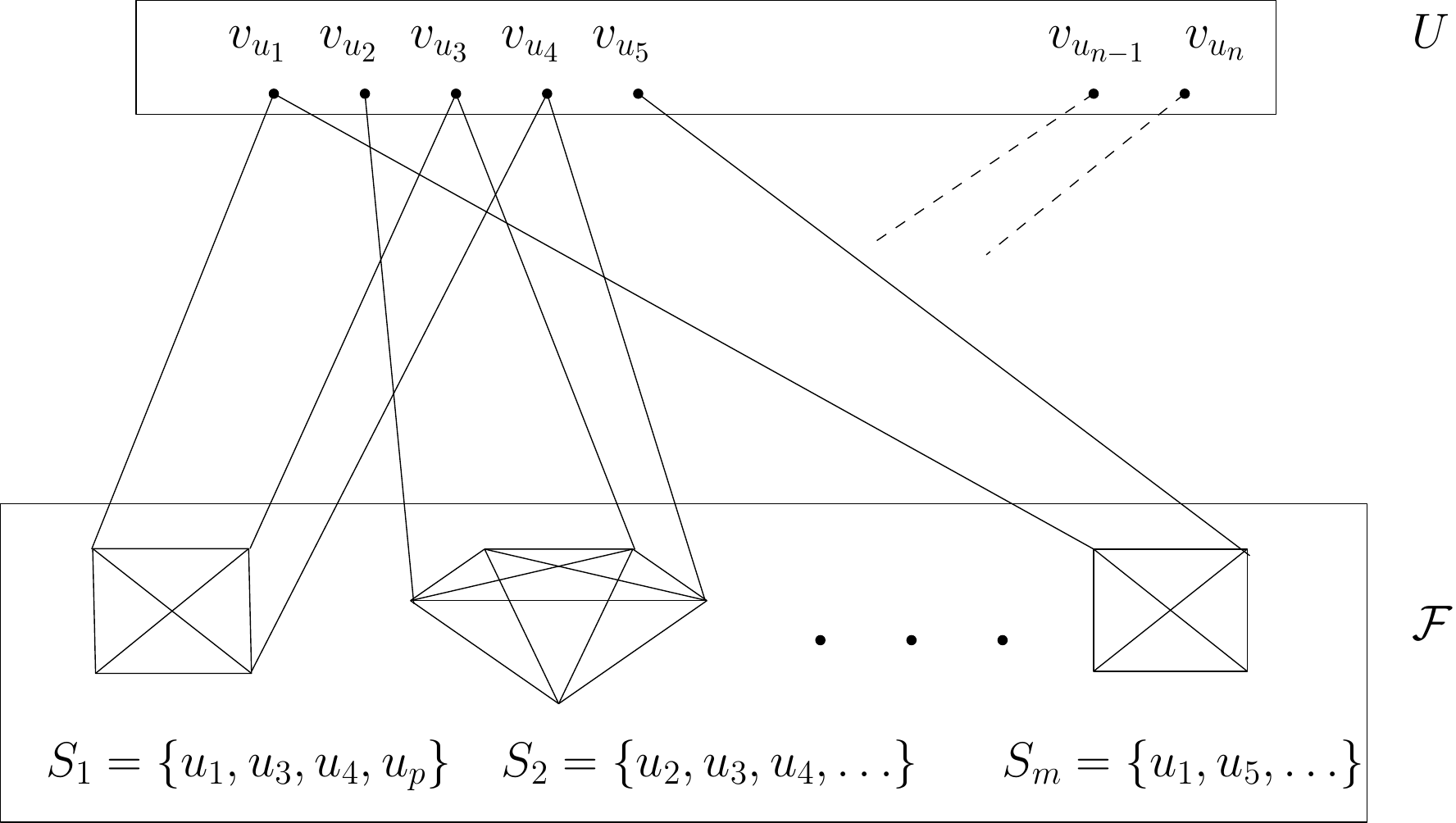}
\caption{Reduction from {\sc Hitting Set} to \VCCVD}
\label{figpoint}
\end{center}
\end{figure}

Note that the set of vertices $\bigcup_{u \in U}v_u$ forms a cvd of size $n$ for the graph $G$ we constructed.
  
We claim that there is a hitting set of size $k$ in the instance $(U,\FF)$ if and only if there is a vertex cover of size $k+\sum\limits_{S \in \FF}^{} (|S|-1)$ in $G$.

Let $X \subseteq U$ be the hitting set of size $k$. For each set $S \in \FF$, mark an element of $X$ which intersects $S$. Now we create a subset of vertices $Y$ in $G$ consisting of vertices corresponding to elements in $X$ plus the vertices corresponding to all the unmarked elements in $S$ for every set $S \in \FF$. Clearly $|Y| = k+\sum\limits_{S \in \FF}^{} (|S|-1)$. We claim that $Y$ is a vertex cover of $G$. Let us look at an edge of $G$ between an element vertex $u$ and its corresponding copy vertex in $S$ containing $u$. If $u$ is unmarked in $S$, then it is covered as the vertex corresponding to $u$ in $S$ is present in $Y$. If it is marked, then the element $v_u$ is present in $Y$ which covers the edge. All the other edges of $G$ have both endpoints in a set $S \in \FF$. Since one of them is unmarked, it belongs to $Y$ which covers the edge.

Conversely, let $Z$ be a vertex cover of $G$ of size $k+\sum\limits_{S \in \FF}^{} (|S|-1)$. Since the graph induced on vertices of set $S$ forms a clique for each $S \in \FF$, $Z$ should contain all the vertices of the clique except one to cover all the edges of the clique. Let us mark these vertices. This means that at least $\sum\limits_{S \in \FF}^{} (|S|-1)$ of the vertices of $Z$ are not element vertices $v_u$. Now the remaining $k$ vertices of $Z$ should hit all the remaining edges in $G$. Suppose it contains another vertex $x$ corresponding to an element $u$ in set $S \in \FF$. Since $x$ only can only cover the edge from $x$ to the element vertex $v_u$ out of the remaining edges, we could remove $x$ and add $v_u$ as it is not present in $Z$ and still get a vertex cover of $G$ of the same size. Hence we can assume, without loss of generality that all the remaining vertices of $Z$ are element vertices $v_u$. Let $X'$ be the union of the $k$ elements corresponding to these element vertices. We claim that $X'$ is a hitting set of $(U,\FF)$ of size $k$. Suppose $X'$ does not hit a set $S \in \FF$. Look at the unmarked vertex $x$ in the vertices of $S$. There is an edge from $x$ to its element vertex $v_u$. Since $u \notin X'$, this edge is uncovered in $G$ giving a contradiction.

Hence given a {\sc Hitting Set} instance $(U,\FF)$, we can construct an instance for \VCCVD\ with parameter $n$. Hence, if we could solve \VCCVD\ in $\OO^*((2- \epsilon)^{k})$ time, we can solve {\sc Hitting Set} in $\OO^*((2- \epsilon)^{n})$ time contradicting SETH.
\end{proof}

A graph $G$ is called a cluster graph if it is a disjoint union of complete graphs. We note that in the above reduction, $G \setminus \bigcup_{u \in U}v_u$ forms a cluster graph. Hence we the following corollary.
\begin{corollary}
{\sc Vertex Cover} parameterized by the cluster vertex deletion set size $k$ cannot be solved in $\OO^*((2- \epsilon)^{k})$ time for any $\epsilon >0$ assuming SETH.
\end{corollary}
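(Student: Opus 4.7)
The plan is to reuse the same reduction from \textsc{Hitting Set} that was used to prove Theorem~\ref{thm:seth-vc}, and simply observe that the graph $G$ produced there already has a cluster vertex deletion set of the same size as the chordal vertex deletion set used. Concretely, I would take an arbitrary instance $(U,\mathcal{F})$ of \textsc{Hitting Set} with $|U|=n$, build the graph $G$ exactly as before (one ``element vertex'' $v_u$ per $u\in U$, one clique on $|S|$ fresh ``copy vertices'' per set $S\in\mathcal{F}$, and edges from each $v_u$ to the copies of $u$ inside each clique containing $u$), and ask for a vertex cover of size $k+\sum_{S\in\mathcal{F}}(|S|-1)$.

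First I would verify the structural claim that $G\setminus\bigcup_{u\in U}v_u$ is a cluster graph. After deleting all element vertices, the only remaining edges are those internal to each set-clique $S\in\mathcal{F}$, and these cliques are vertex-disjoint by construction (every copy vertex belongs to exactly one $S$). Thus what remains is a disjoint union of cliques, i.e.\ a cluster graph, so $\{v_u : u\in U\}$ is a cluster vertex deletion set of $G$ of size $n$. In particular the cluster-vertex-deletion parameter of $G$ is at most $n$.

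Next I would invoke the correctness of the reduction established inside the proof of Theorem~\ref{thm:seth-vc}: $(U,\mathcal{F})$ has a hitting set of size $k$ if and only if $G$ has a vertex cover of size $k+\sum_{S\in\mathcal{F}}(|S|-1)$. This part is already done; the new statement uses the \emph{same} equivalence, only reinterpreted with a tighter parameter. Hence if \textsc{Vertex Cover} parameterized by cluster vertex deletion set size admitted an $\mathcal{O}^*((2-\epsilon)^{k})$ algorithm for some $\epsilon>0$, then composing it with the above polynomial-time reduction would yield an $\mathcal{O}^*((2-\epsilon)^{n})$ algorithm for \textsc{Hitting Set}, contradicting SETH by the result of Cygan et al.~\cite{DBLP:journals/talg/CyganDLMNOPSW16}.

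There is really no obstacle here; the entire content is the structural observation that the gadget graph built from $(U,\mathcal{F})$ falls into the stricter class ``chordal + its chordal part is already a disjoint union of cliques.'' The only small point to be careful about is to state explicitly that a cluster graph is chordal, so the previous reduction's intermediate property still holds and no re-verification of the hitting-set-to-vertex-cover equivalence is needed; the corollary then follows by a single sentence after reciting the reduction.
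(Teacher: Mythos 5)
Your proposal is correct and matches the paper's argument exactly: the paper also just observes that in the reduction of Theorem~\ref{thm:seth-vc} the set $\bigcup_{u\in U}v_u$ is a cluster vertex deletion set of size $n$, since the set-cliques are pairwise disjoint, and then reuses the already-established equivalence. Nothing further is needed.
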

 \begin{theorem} \label{thm:seth-fvsoct}
{\sc FVS by CVD} and {\sc OCT by CVD} given the modulator cannot be solved in $\OO^*((2- \epsilon)^{k})$ time for any $\epsilon >0$ assuming SETH.
\end{theorem}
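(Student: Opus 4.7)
The plan is to give a polynomial-time, parameter-preserving reduction from \VCCVD\ with modulator given to both {\sc FVS by CVD} and {\sc OCT by CVD} with modulator given. First I observe that the hitting-set reduction in the proof of Theorem~\ref{thm:seth-vc} explicitly produces the modulator $\bigcup_{u \in U} v_u$, so its conclusion actually rules out an $\OO^*((2-\epsilon)^k)$ algorithm for \VCCVD\ even when the modulator is part of the input; this serves as my starting point.

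Given an instance $(G, M, k, \ell)$ of \VCCVD\ with $M$ a chordal vertex deletion set of size $k$, I would construct $G'$ by adding, for each edge $e = \{u, v\} \in E(G)$, a new vertex $w_e$ together with the edges $\{u, w_e\}, \{v, w_e\}$, while keeping the original edge $\{u, v\}$; equivalently, every edge of $G$ becomes a triangle in $G'$. The output is $(G', M, k, \ell)$ using the same modulator $M$. To see that $M$ is still a CVD of $G'$, note that $G' - M$ consists of $G - M$ (chordal by assumption) augmented, for each $e = \{u, v\} \in E(G)$, by the vertex $w_e$ whose neighborhood in $G' - M$ is $\{u, v\} \setminus M$, i.e., either a pendant or a degree-two vertex attached to an edge already present in $G - M$. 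Any induced cycle of length at least four through $w_e$ in $G' - M$ would include both $u$ and $v$, but then the original edge $\{u, v\}$ acts as a chord, a contradiction.

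The core step is the identity $\mathrm{VC}(G) = \mathrm{FVS}(G') = \mathrm{OCT}(G')$. For the easy direction, if $C$ is a vertex cover of $G$ then each $w_e$ has degree at most one in $G' - C$ (since $C$ covers $e$) and $V(G) \setminus C$ is independent, so $G' - C$ is a union of stars and therefore a bipartite forest; this yields $\mathrm{FVS}(G'), \mathrm{OCT}(G') \leq |C|$. For the other direction, given an FVS (resp.\ OCT) $S$ of $G'$, I would iteratively replace each $w_e \in S$ by the endpoint $u$ of $e$, or simply remove $w_e$ from $S$ if $u$ or $v$ is already in $S$. The resulting set $S' \subseteq V(G)$ satisfies $|S'| \leq |S|$ and remains an FVS (resp.\ OCT) of $G'$, since every cycle in $G'$ through $w_e$ also uses both $u$ and $v$ (as $w_e$ has degree two). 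Finally, for every edge $e = \{u, v\} \in E(G)$ the triangle $u \, v \, w_e$ is an odd cycle of length three that $S'$ must intersect, and $w_e \notin S'$ forces $S' \cap \{u, v\} \neq \emptyset$, so $S'$ is a vertex cover of $G$.

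Putting the pieces together, a hypothetical $\OO^*((2-\epsilon)^k)$ algorithm for either {\sc FVS by CVD} or {\sc OCT by CVD} given the modulator would immediately transfer, via this reduction, to an $\OO^*((2-\epsilon)^k)$ algorithm for \VCCVD\ given the modulator, contradicting the SETH-based lower bound from Theorem~\ref{thm:seth-vc}. The one place where I expect to have to be careful is the replacement argument in the reverse direction: one must verify that swapping a $w$-vertex in the solution for an original endpoint never destroys the FVS/OCT property, which hinges on the fact that each $w_e$ has degree exactly two with neighborhood $\{u, v\}$.
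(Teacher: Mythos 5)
Your proof is correct and uses essentially the same idea as the paper: the paper's proof sketch likewise replaces each edge $e=\{u,v\}$ of the Theorem~\ref{thm:seth-vc} construction by a triangle on $u$, $v$ and a fresh vertex $v_e$, and observes that the element vertices still form a CVD. Your packaging of this as a black-box, parameter-preserving reduction from \VCCVD\ with the modulator given, via the identity $\mathrm{VC}(G)=\mathrm{FVS}(G')=\mathrm{OCT}(G')$ and the simplicial-vertex argument for chordality, is simply a clean way of filling in the details the paper leaves as ``follows on similar lines.''
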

\begin{proof}[Proof Sketch]
To prove the above theorem, we again give a reduction very similar to the reduction given in the proof of Theorem \ref{thm:seth-vc}. Consider a {\sc Hitting Set} instance $(U,\FF)$. To create an instance of {\sc Feedback Vertex Set by CVD} or  {\sc Odd Cycle Transversal by CVD}, we replace each edge $e=(u,v)$ in the above reduction by a triangle $t_e$ with vertices $u,v$ and new vertex $v_e$. It can be easily shown that the graph obtained after removing the vertices corresponding to elements in $U$ forms a chordal graph. 
 The proof follows on similar lines.

\end{proof}

\section{Conclusion}

Our main contribution is to develop techniques for addressing structural parameterization problems when the modulator is not given. The question, of Fellows et.al. about whether there is an FPT algorithm for {\sc Vertex Cover} parameterized by perfect deletion set with only a promise on the size of the deletion set, is open.


Regarding problems parameterized by chordal deletion set size,
we remark that not all problems that have FPT algorithms when parameterized by treewidth, necessarily admit an FPT algorithm parameterized by CVD. For example, {\sc Dominating Set} parameterized by treewidth admits an FPT algorithm \cite{cygan2015parameterized} while {\sc Dominating Set} parameterized by CVD is para-NP-hard as the problem is NP-hard in chordal graphs~\cite{Booth1982DominatingSI}. 
Generalizing our algorithms for other problems, for example, for the optimization problems considered by Liedloff et al.~\cite{DBLP:journals/algorithmica/LiedloffMT19} to obtain better FPT algorithms
when the modulator is not given, would be an interesting direction.
\subparagraph*{Acknowledgements}
We thank Saket Saurabh for pointing out the known example of 
 {\sc Vertex Cover by Konig Vertex Deletion}  that is solved without needing the modulator as input.

\bibliographystyle{abbrv}
\bibliography{main}


\end{document}